\newtheoremstyle{thm}
  {}
  {}
  {}
  {9pt}
  {\itshape}
  {:}
  {.5em}
  {}
\theoremstyle{thm}
\newtheorem{thm}{Theorem}
\newtheorem{lem}{Lemma}
\newtheoremstyle{pro}
  {}
  {}
  {}
  {18pt}
  {\itshape}
  {:}
  {.5em}
  {}
\theoremstyle{pro}
\newtheoremstyle{thm2}
  {}
  {}
  {}
  {9pt}
  {\itshape}
  {:}
  {\newline}
  {}
\theoremstyle{thm2}
\newtheorem*{rem}{Remarks}
\numberwithin{equation}{section}
\newenvironment{remenum}[1][]{~\vspace*{-\baselineskip}\begin{enumerate}[#1]}{\end{enumerate}}
\newtheoremstyle{thm3}
  {}
  {}
  {}
  {9pt}
  {\itshape}
  {:}
  {9pt}
  {}
\theoremstyle{thm3}
\newtheorem*{singlerem}{Remark}
\begin{document}
\newtheorem{theorem}{Theorem}

\title{Bits Through Deterministic Relay Cascades with Half-Duplex Constraint}
\author{\IEEEauthorblockN{Tobias~Lutz, Christoph~Hausl,~\IEEEmembership{Student Member,~IEEE,} and~Ralf~K\"{o}tter,~\IEEEmembership{Fellow,~IEEE}}
\thanks{Manuscript received June 11, 2009; revised February 26, 2010; accepted June 27, 2011. Date of current version August 22, 2011. This work was supported by the European Commission in the framework of the FP7 (contract number 215252) and by DARPA under the ITMANET program. The material in this paper was presented in part~\cite{LuHaKo08} at the IEEE International Symposium on Information Theory, Toronto, Canada, July 6-11, 2008.

Tobias Lutz and Christoph Hausl are with the Institute for Communications Engineering, TU M\"{u}nchen, 80290 M\"{u}nchen, Germany (Email: \{tobi.lutz, christoph.hausl\}@tum.de).

Ralf K\"{o}tter, deceased, was with the Institute for Communications Engineering, TU M\"{u}nchen, 80290 M\"{u}nchen, Germany.

Communicated by M. Gastpar, Associate Editor for Shannon Theory.}}

\markboth{accepted for publication in IEEE Transactions on Information Theory}
{Lutz \MakeLowercase{\textit{et al.}}: Bits Through Deterministic Relay Cascades with Half-Duplex Constraint}

\maketitle

\begin{abstract}
Consider a relay cascade, i.e. a network where a source node, a sink node and a certain number of intermediate source/relay nodes are arranged on a line and where adjacent node pairs are connected by error-free $(q+1)$-ary pipes. Suppose the source and a subset of the relays wish to communicate independent information to the sink under the condition that each relay in the cascade is half-duplex constrained. A coding scheme is developed which transfers information by an information-dependent allocation of the transmission and reception slots of the relays. The coding scheme requires synchronization on the symbol level through a shared clock. The coding strategy achieves capacity for a single source. Numerical values for the capacity of cascades of various lengths are provided, and the capacities are significantly higher than the rates which are achievable with a predetermined time-sharing approach. If the cascade includes a source and a certain number of relays with their own information, the strategy achieves the cut-set bound when the rates of the relay sources fall below certain thresholds. For cascades composed of an infinite number of half-duplex constrained relays and a single source, we derive an explicit capacity expression. Remarkably, the capacity in bits/use for $q=1$ is equal to the logarithm of the golden ratio, and the capacity for $q=2$ is $1$ bit/use.
\end{abstract}
\begin{IEEEkeywords}\theoremstyle{break}
\newtheorem{bthm}{B-Theorem}
Half-duplex constraint, relay networks, network coding, timing, constrained coding, capacity, capacity region, method of types, golden ratio.
\end{IEEEkeywords}

\IEEEpeerreviewmaketitle

\section{Introduction}
\IEEEPARstart{A} relay cascade is a network where a source node, a sink node and a certain number of intermediate source/relay nodes are arranged on a line. We consider the problem where a source node and certain relay nodes wish to communicate independent messages to the sink under the condition that each relay is half-duplex constrained, i.e. is not able to transmit and receive simultaneously. Throughout the paper, we assume that adjacent node pairs are connected by error-free $(q+1)$-ary pipes. This approach lets us understand half-duplex constrained transmission without having to consider channel noise. Moreover, we may use combinatorial arguments instead of stochastic arguments. 

A natural strategy for half-duplex devices is to define a time-division schedule a priori. Under this assumption, the capacity or rate region of various half-duplex constrained relay channels \cite{Mad05}, \cite{Kho03} and networks \cite{Tou03} has been determined. We will, however, show that predetermined time-sharing falls considerably short of the theoretical optimum or, conversely, higher rates are possible by an information-dependent allocation of the transmission and reception slots of the relays. 

The meaning of information-dependent allocation scheme is illustrated in the following example. Let $\mathcal{W}_0=\{0,\dots,7\}$ be a message set. In each block $i=1,2,\dots$ of length $4$, the source wishes to communicate a randomly chosen message $w_0(i)\in\mathcal{W}_0$ to the destination via a single half-duplex constrained relay node. A direct link between source and destination does not exist. Suppose the alphabet of both source and relay equals $\{0,1,\textrm{N}\}$ where ``N'' indicates a channel use without transmission and $\{0,1\}$ is a $q=2$-ary transmission alphabet. The half-duplex constraint is modeled as follows. When the relay uses symbol~``N'', i.e. the relay is quiet, it is able to listen to the source and otherwise not. Let $\mathbf{x}_0(i)$ be the codeword chosen by the source encoder to represent $w_0(i)$ in block~$i$ and let $\mathbf{x}_1(i)$ indicate the codeword chosen by the relay encoder for representing $w_0(i-1)$ in block~$i$. The coding scheme is illustrated in Table \ref{cod_example_rand}. The source encoder maps each message $w_0(i)$ to $\mathbf{x}_0(i)$ by allocating the corresponding binary representation of $w_0(i)$, i.e. three bits, to four time slots. The precise allocation of the three bits to four time slots is determined by the following protocol. In the first block, the source allocates three bits to the first three time slots of $\mathbf{x}_0(1)$. Now assume that the source has already sent codeword $\mathbf{x}_0(i)$ to the relay. Based on the first two binary digits of the noiselessly received codeword $\mathbf{x}_0(i)$, the relay encoder determines which of the four time slots to use for transmission in $\mathbf{x}_1(i+1)$ according to the following rule: $00$, $01$, $10$, $11$ in $\mathbf{x}_0(i)$ tells the relay to send in the first, the second, the third or the fourth time slot of $\mathbf{x}_1(i+1)$. The binary value to be transmitted in $\mathbf{x}_1(i+1)$ is equal to the third bit in $\mathbf{x}_0(i)$. Since the source encoder knows the scheme used by the relay, it can allocate its three new bits in $\mathbf{x}_0(i+1)$ to those slots in which the relay is able to listen. Hence, the relay encodes a part of its information in the timing of the transmission symbols. The sink estimates message~$w_0(i-1)$ from the received relay codeword~$\mathbf{x}_1(i)$ using both the position of the transmission symbol and its value and obtains~$\hat{w}_0(i)$. In this example, a rate of $0.75$ bit per use is asymptotically achievable if the number of blocks becomes large. By allowing arbitrarily long codewords, we will show that an extension of the strategy approaches $1.1389$~b/u which is also the capacity of the single relay cascade with half-duplex constraint when the transmission alphabet is binary. 
\begin{table}[!t]
\renewcommand{\arraystretch}{1.3}
\caption{The relay encodes a part of the information by the position of the transmission symbols}
\begin{center}
\begin{tabular}{c|| c| c| c| c} \hline
block $i$&$w_0(i)$&  $\mathbf{x}_0(i)$ & $\mathbf{x}_1(i)$ & $\hat{w}_0(i)$ \\ \hline\hline
$\parbox[0pt][1em][c]{0cm}{}i=1$&$1$ ($001$)& $001\textrm{N}$ & $\textrm{NNNN}$ & -\\
$\parbox[0pt][1em][c]{0cm}{}i=2$&$2$ ($010$)& $\textrm{N}010$ & $1\textrm{N}\textrm{N}\textrm{N}$ & $1$ \\ 
$\parbox[0pt][1em][c]{0cm}{}i=3$&$4$ ($100$)& $1\textrm{N}00$ & $\textrm{N}0\textrm{N}\textrm{N}$ & $2$ \\ 
$\parbox[0pt][1em][c]{0cm}{}i=4$&$7$ ($111$)& $11\textrm{N}1$ & $\textrm{N}\textrm{N}0\textrm{N}$ & $4$ \\ 
$\vdots$&$\vdots$&$\vdots$&$\vdots$&$\vdots$\\ \hline
\end{tabular}
\label{cod_example_rand}
\end{center}
\end{table}
The example suggests that information encoding by means of timing is beneficial in the context of half-duplex constrained transmission. A similar example for $q=1$ was shown in \cite{KraMarYat06,Kra07}.

In Section~\ref{Related_Literature} we provide a snapshot of related literature. In Section~\ref{Network_Model_sec} we introduce a channel model which captures the half-duplex constraint in a simple way. We introduce a capacity achieving coding strategy in Section~\ref{Section3}. The strategy is based on allocating the transmission and reception time slots of a node in dependence of the node's previously received data. The proposed strategy requires synchronization on the symbol level through a shared clock. In Section~\ref{Section4}, the performance of the coding strategy is analyzed yielding several capacity results. In the case of a relay cascade with a single source, it is shown that the coding strategy is capacity achieving, i.e. approaches a rate equal to
\begin{equation}
C_{m-1}(q)=\max_{p_{X_0 \dots X_m}}\min_{1\leq i \leq m}H(Y_i|X_i)
\label{Intro:Single_Source_C}
\end{equation}
where $m-1$ indicates the number of relays in the cascade and $X_i$ and $Y_i$ are the sent and received symbol of the $i$th relay. If the cascade includes a source and a certain number of relays with their own information, the strategy achieves the cut-set bound given that the rates of the relay sources fall below certain thresholds. Hence, a partial characterization of the boundary of the capacity region follows. For cascades composed of an infinite number of half-duplex constrained relays, we show that the capacity in bits/use (abbreviated as b/u in the remainder) is given by
\begin{equation}
C_\infty(q)=\log\left (\frac{1+\sqrt{4q+1}}{2}\right).
\end{equation}
Remarkably, $C_\infty(1)$ is equal to the logarithm of the golden ratio and $C_\infty(2)$ is $1$ b/u. In Section~\ref{Section5} the capacity results are applied to various special cases. In particular, we transform (\ref{Intro:Single_Source_C}) into a convex optimization program with linear objective and provide numerical solutions for $C_{m-1}(q)$ for different values of~$m$ and~$q$. Further, the single relay channel with a source and a relay source and binary transmission alphabet is considered and an explicit expression of the cut-set bound and of the achievable segment on the cut-set bound is computed. We finally show that the proposed coding strategy can be applied to wireless trees and to the half-duplex constrained butterfly network. In the latter case the proposed timing strategy outperforms the well-known XOR-based network coding strategy. 

\section{Related Literature}\label{Related_Literature}
The classical relay channel goes back to van der Meulen~\cite{meu71}. Further significant results concerning capacity and coding were obtained by Cover and El Gamal in~\cite{CovGam79}. A comprehensive literature survey as well as a classification of various \textit{decode-and-forward} and \textit{compress-and-forward} strategies for relay channels and small multiple relay networks is given in \cite{KraGasGup05}. General relay networks are very difficult to analyze (even the capacity of the non-degraded single relay channel is an open question). Motivated by the fact that line networks are often more accessible for analysis and, further, are fundamental building blocks of general communications systems, various source and channel coding problems have been examined without the assumption of half-duplex constrained nodes. 

Yamamoto~\cite{Yam81} considers a deterministic three node line network where the first node generates two random sequences. The region of achievable rates is found such that the second node is able to reconstruct the first sequence and the third node the second sequence within prescribed distortion tolerances. These results are extended to longer lines and branching communication systems in the same paper. A related version of the three node source coding problem is investigated in~\cite{VasTiaDig06}. The encoder at the first node intends to communicate a random sequence within certain distortion constraints to the relay and the destination under the assumption that the relay and the destination have access to individual side information about the source. The authors derive inner and outer bounds for the rate-distortion region and characterize scenarios where both bounds coincide. A \textit{distributed} source coding problem for the three node line network is examined in~\cite{CufSuGam09}. In contrast to the cases before, the relay acts as a source which is correlated to the source at the first node. The task of the destination is to estimate a function of the output of the two sources. Inner and outer bounds on the achievable rate region are provided such that an arbitrarily chosen distortion constraint is satisfied.

The channel capacity of three node line networks composed of two \textit{identical} binary channels where no processing is allowed at the middle terminal was examined in an early work~\cite{Sil55}. The author asks which channel of the infinite set of binary channels with equal capacity has to be cascaded with itself in order to achieve the largest end-to-end capacity. The answer is that a symmetric binary channel has a higher capacity under cascade than an asymmetric channel with the same capacity, unless the channels have very low capacity. Finite length cascades of \textit{identical} discrete memoryless channels are considered in~\cite{Sim70} under the assumption that the intermediate terminals do not possess any processing capability and that the transition matrix of the subchannels is nonsingular. By means of the eigenvalue decomposition of the transition matrix, the channel capacity is derived. Another work in which cascades composed of \textit{identical} discrete memoryless channels are investigated is \cite{NieFraTun07}. However, it is assumed that the intermediate relay nodes are able to process blocks of a fixed length. It is then shown that the capacity of the infinite length cascade equals the rate of the zero-error code of the underlying channel and that the capacity is always upper-bounded by the zero-error capacity of the underlying channel. In~\cite{KieCof93} the problem of finding the optimal ordering of a set of $n$ (\textit{distinct}) binary channels is analyzed such that the capacity of the resulting cascade is maximized. The question results from the observation that ordering has a strong influence on the capacity because matrix multiplication is not commutative. In the case of binary channels with positive determinants the authors are able to specify the optimal ordering. A line network composed of erasure channels is considered in \cite{PakFraSho05} for a single source-destination pair. The authors propose coding schemes which are based on fountain codes. 

In the work at hand we apply the idea of timing to half-duplex line networks. Timing is not a new idea in the information theoretical literature and has already been used in conjunction with queuing channels. Anantharam and Verd\'{u} showed~\cite{AnaVer96} that encoding information into the time differences of arrival to the queue achieves the capacity of the single server queue with exponential service distribution. The discrete-time version of this problem was analyzed in \cite{BedAzi98}. In \cite{Kra04}, Kramer developed a memoryless half-duplex relay channel model and computed decode and forward rates due to Cover and El~Gamal \cite{CovGam79}. He noticed that higher rates are possible when the transmission and reception time slots of the relay are random since one can send information through the timing of operating modes.

\section{Network Model and Information Flow} \label{Network_Model_sec}
\begin{figure*}[!t]
\psfrag{X0}{\small{\textrm{$X_0$}}}
\psfrag{Xi}{\small{\textrm{$X_i$}}}
\psfrag{Xi-1}{\small{\textrm{$X_{i-1}$}}}
\psfrag{Yi}{\small{\textrm{$Y_i$}}}
\psfrag{Yi-1}{\small{\textrm{$Y_{i-1}$}}}
\psfrag{X_m}{\small{\textrm{$X_{m}$}}}
\psfrag{Y_m}{\small{\textrm{$Y_{m}$}}}
\psfrag{Relay}{{\textrm{$\small{\textrm{Relay}}$}}}
\psfrag{i}{\small{\textrm{${i}$}}}
\psfrag{i-1}{\small{\textrm{${i-1}$}}}
\psfrag{1}{\scriptsize{$1$}}
\psfrag{2}{\scriptsize{$2$}}
\centering
\epsfig{file=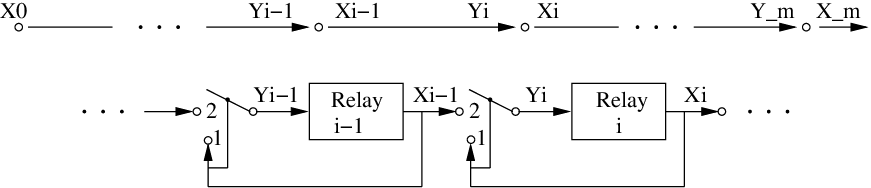, scale = 1}
\caption{A noiseless relay cascade and the link model illustrated by means of feedback. If relay~$i$ is transmitting, the switch is in position~$1$ otherwise in position~$2$. }
\label{fig:sys_model}
\end{figure*}
\subsection{Network Model}\label{Network_Model}
Consider the discrete memoryless relay cascade as depicted in Fig.~\ref{fig:sys_model}. The underlying topology corresponds to a directed path graph in which each node is labeled by a distinct number from $\mathcal{V}=\{0,\dots,m\}$ with $m>1$. The integers $0$ and $m$ belong to the first source and the sink, respectively, while all remaining integers $1$ to $m-1$ represent half-duplex constrained relays, i.e. relays which cannot transmit and receive at the same time. The connectivity within the network is described by the set of edges $\mathcal{E}=\{i \times (i+1):0\leq i \leq m-1\}$, i.e. the ordered pair $i \times (i+1)$ represents the communications link from node $i$ to node $i+1$. The output of the $i$th node, which is the input to channel $i \times (i+1)$ is denoted as $X_i$ and takes values on the alphabet $\mathcal{X}=\{0,\dots,q-1\} \cup \{\textrm{N}\}$ where $\mathcal{Q}=\{0,\dots,q-1\}$ denotes the $q$-ary transmission alphabet while ``N'' is meant to signify a channel use in which node~$i$ is not transmitting. The input of the $i$th node, which is the output of channel $(i-1) \times i$ is denoted as $Y_i$ and is given by
\begin{equation}
Y_{i}=\left\{\begin{array}{ll} X_{i-1}, & \mbox{if }X_{i}=\textrm{N}\\ X_{i}, & \mbox{if }X_{i} \in \mathcal{Q} \end{array}\right.
\label{relay_ch_model}
\end{equation}
where $1 \leq i \leq m$. Channel model (\ref{relay_ch_model}) captures the half-duplex constraint as follows. Assume relay $i$ is in transmission mode, i.e. $X_i \in \mathcal{Q}$. Then relay~$i$ hears itself ($Y_i=X_i$) but cannot listen to node~$i-1$ or, equivalently, relay $i$ and node~$i-1$ are disconnected. However, if relay $i$ is not transmitting, i.e. $X_i=\textrm{N}$, it is able to listen to relay~$i-1$ via a noise-free $(q+1)$-ary pipe ($Y_i=X_{i-1}$). The sink listens all the time, i.e. $X_m$ is always equal to \textrm{N}, and therefore its input is given by $Y_{m}=X_{m-1}$. Another interpretation of the channel model is that the output $X_i$ of relay~$i$ controls the position of a switch which is placed at its input. If relay~$i$ is transmitting, the switch is in position~$1$ otherwise it is in position~$2$ (see Fig.~\ref{fig:sys_model}). Since a pair of nodes is either perfectly connected or disconnected, we obtain a deterministic network with $p_{Y_1\dots Y_{m}|X_0 \dots X_m} \in\{0,1\}$ that factors as $\left[\prod_{i=1}^{m-1}p_{Y_i|X_iX_{i-1}}\right]p_{Y_m|X_{m-1}}$ where $p_{Y_i|X_iX_{i-1}}$ is defined by (\ref{relay_ch_model}).

\subsection{Information Flow}\label{Info_flow}
Every node $v \in \{0,\dots,m-1\}$ draws its messages uniformly and independently from the message set $\mathcal{W}_v=\left \{1,\dots,2^{nR_v}\right\}$ where $w_{v}(b)$ denotes the message sent by node~$v$ to node~$v+1$ in block~$b$. Each block has a length of~$n$. Observe that this setup includes the case that only a subset of the relays communicate own information to the sink by setting the rate of the remaining relays to zero. The relays allocate information to the codewords as follows. At the end of block~$b-1$, each relay $v$ with a rate $R_v>0$ carries out two tasks. It draws a new message $w_{v}(b)$ and it decodes the messages $\{w_{0}(b-v),\dots,w_{v-1}(b-1)\}$ from the received codeword $\mathbf{y}_{v}(b-1)$. The new message together with the decoded messages are forwarded to node~$v+1$ in block~$b$ by means of the sequence $\mathbf{x}_v(b)$. Similarly, each relay $v$ without own information, i.e. $R_v=0$, decodes the messages $\{w_{0}(b-v),\dots,w_{v-1}(b-1)\}$ at the end of block~$b-1$ and forwards the decoded messages to the next node~$v+1$ by means of $\mathbf{x}_v(b)$. Source node~$0$ sends one message $w_0(b)$ per block represented through $\mathbf{x}_0(b)$. We assume an initialization period of $m-1$ blocks. In the first block node~$0$ forwards information, in the second block nodes~$0$ and $1$ forward information and so forth. From the $m$th block onwards all nodes (except of the sink) forward information. Thus, the sink does not decode until the end of the $m$th block. Since a very large number of transmission blocks is considered, it is allowed to neglect the initial delay in an asymptotic analysis. In the next paragraph, a coding strategy is introduced which realizes the outlined information flow.

\section{A Timing Code for Line Networks with Multiple Sources} \label{Section3}
\subsection{General Idea and Codebook Sizes} \label{sec:coding_idea}
A coding strategy is introduced which relies on the observation that information can be represented not only by the value of code symbols but also by the position of code symbols, i.e. by timing the transmission and reception slots of the relay nodes. The strategy requires synchronization on the symbol level through a shared clock. The codebook construction is recursive and guarantees that adjacent nodes do not transmit at the same time. The following encoding techniques are applied at the source and the relays where $n_i$ denotes the number of transmitted symbols of node~$i$ within one block of~$n$ symbols.
\begin{itemize}
\item At relay $m-1$: Relay $m-1$ represents information by choosing $n_{m-1}$ transmission symbols per block from the $q$-ary transmission alphabet $\mathcal{Q}$ combined with allocating the $n_{m-1}$ symbols to the transmission block of $n$ symbols. Thus, $q^{n_{m-1}} {n\choose n_{m-1}}$ different sequences $\mathbf{x}_{m-1}$ of length $n$ are available at relay $m-1$. Observe that $q^{n_{m-1}}$ equals the number of possible distinct sequences when the $q$-ary symbols are located at fixed slots while ${n \choose n_{m-1}}$ equals the number of possible transmission-listen patterns. 
\item At relay $i$, $1\leq i \leq m-2$: Observe that the effective codeword length of relay~$i$ reduces to $n-n_{i+1}$ since relay~$i+1$ cannot listen to relay~$i$ when it (relay~$i+1$) transmits. For each transmission-listen pattern used by node $i+1$, node $i$ generates $q^{n_i} {n-n_{i+1}\choose n_{i}}$ different sequences by allocating $n_i$ transmission symbols from the alphabet $\mathcal{Q}$ in all possible ways to the $n-n_{i+1}$ listen slots of the pattern. The remaining slots of the pattern, i.e. the slots in which node $i+1$ transmits, are filled with idle symbols ``N''. As before, $q^{n_i}$ equals the number of possible distinct sequences when the $q$-ary symbols are located at fixed slots while ${n-n_{i+1}\choose n_{i}}$ equals the number of possible transmission-listen patterns. The procedure generates a certain number of transmission-listen patterns used by node~$i$. 
\item At source node $0$: The source uses the $(q+1)$-ary alphabet~$\mathcal{X}=\mathcal{Q}\cup \{\textrm{N}\}$ for encoding without transmitting information in the timing of the symbols. Hence, the non-transmission symbol ``N'' is used as a regular alphabet symbol. Due to the half-duplex constraint at relay~$1$, the effective codeword length of the source reduces to $n-n_{1}$ what results from the fact that relay $1$ cannot pay attention to the source when it (relay $1$) transmits. Thus, the source is able to generate $(q+1)^{n-n_1}$ different sequences $\mathbf{x}_0$.
\end{itemize}

Next, the maximum size of $\mathcal{W}_0$, $\mathcal{W}_1,\dots,$ $\mathcal{W}_{m-1}$ is given. From the previous paragraph, we immediately obtain
\begin{equation}
|\mathcal{W}_0| \leq (q+1)^{n-n_1}.
\label{|W_0|}
\end{equation}
Both the source and the relays choose their messages uniformly and independently of each other. Hence, relay~$v$ is required to reserve $\prod_{i=0}^{v-1}|\mathcal{W}_{i}|$ sequences in order to represent an arbitrary combination of arriving messages $\{w_{0}(b-v),\dots,w_{v-1}(b-1)\}$. Arriving messages are encoded by each relay~$v$ with transmission patterns and a fixed number $k_v\in \{0,\dots,n_v\}$ of transmission symbols. To be more precise, each combination of arriving messages is assigned injectively to a subset of the set of all the sequences $\mathbf{x}_{v}$. The subset comprises those sequences $\mathbf{x}_{v}$ such that all transmission patterns occur and such that the first $k_v$ transmission symbols of each transmission pattern take all possible values. The remaining $n_v-k_v$ transmission symbols per transmission pattern are used by relay~$v$ for encoding own messages $w_{v}(b)$. With the foregoing explanation in mind, we have for all $v \in \mathcal{V}\setminus{\{0,m\}}$
\begin{equation}
\prod_{i=0}^{v-1}|\mathcal{W}_i| \leq q^{k_v} {n -n_{v+1}\choose n_{v}}
\label{|W_relay_source|}
\end{equation}
and 
\begin{equation}
|\mathcal{W}_{v}|\leq q^{n_v-k_v}.
\label{|W|-constraint}
\end{equation}
If relay~$v$ does not have own information, then $k_v=n_v$. As a final remark, transmission patterns can only be used for encoding arriving messages. Otherwise, if relay~$v$ would encode own messages $w_{v}(b)$ by means of transmission patterns, node~$v-1$ would not know when node~$v$ listens in block~$b$ as $w_v(b)$ (hence the transmission pattern used by node~$v$) is not known by node~$v-1$.
\subsection{Example}
We now illustrate the ideas introduced in the previous section by constructing a code for a relay cascade with four nodes, i.e. $\mathcal{V}=\{0,\dots,3\}$, where nodes~$0$ and $2$ act as sources with a rate greater than zero. The transmission alphabet is binary, i.e. $q=2$, and the code parameters are $n=4$, $n_1=1$, $n_2=2$ (and $n_3=0$ of course). According to (\ref{|W_0|}) to (\ref{|W|-constraint}), the maximum size of the message sets is $|\mathcal{W}_0|=|\mathcal{W}_2|=4$ obtained for $k_1=1$ and $k_2=0$, which corresponds to a sum rate of $1$~b/u. Table~\ref{Tab:Ex_Code} depicts possible codebooks $\mathcal{C}_{0}$, $\mathcal{C}_{1}$, $\mathcal{C}_{2}$ for nodes $0$, $1$ and $2$, respectively, and Table~\ref{cod_example_2} shows how to use the codebooks in order to send a particular message sequence. 

Let us first consider $\mathcal{C}_{2}$ which consists of $16$ different codewords. The four underlying transmission patterns (arbitrarily chosen from the ${4\choose 2}$ possible patterns) are shown in the last column of Table~\ref{Tab:Ex_Code}. Each transmission pattern is identified with a unique color $r \in \{a,b,c,d\}$ and the $n_2=2$ binary transmission slots within each pattern are marked with $\textrm{B,C}\in\{0,1\}$. Node~$2$ uses the transmission patterns for representing source messages $w_0$. In detail, pattern~$a$ represents $w_0=0$, pattern~$b$ represents $w_0=1$ and so forth. Own messages $w_2$ are encoded by the transmission symbols \textrm{B} and \textrm{C} according to $w_2 \mapsto (\textrm{B},\textrm{C})$:  $0 \mapsto (0,0)$, $1 \mapsto (0,1)$, $2 \mapsto (1,0)$, $3 \mapsto (1,1)$. 
\begin{table*}[!t]
\renewcommand{\arraystretch}{1.3}
\begin{center}
\caption{}
\subtable[Example codebooks for source, relay and relay source]{
\begin{tabular}{c||c c c|c c c c| c} 
$w_0$& \multicolumn{3}{c|}{$\mathcal{C}_{0}$}&\multicolumn{4}{c|}{$\mathcal{C}_{1}$}&$\mathcal{C}_{2}$ \\ \hline
$0$&$\textrm{N}0\textrm{N}\textrm{N}$ $e$ &$0\textrm{N}\textrm{N}\textrm{N}$ $f$& $0\textrm{N}\textrm{N}\textrm{N}$ $g$&$0\textrm{N}\textrm{N}\textrm{N}$ $(a,e)$&$\textrm{N}0\textrm{N}\textrm{N}$ $(b,f)$&$0\textrm{N}\textrm{N}\textrm{N}$  $(c,e)$&$\textrm{N}0\textrm{N}\textrm{N}$ $(d,f)$&NBNC $a$ \\
$1$&$\textrm{N}1\textrm{N}\textrm{N}$ $e$&$1\textrm{N}\textrm{N}\textrm{N}$ $f$& $1\textrm{N}\textrm{N}\textrm{N}$ $g$&$1\textrm{N}\textrm{N}\textrm{N}$ $(a,e)$&$\textrm{N}1\textrm{N}\textrm{N}$ $(b,f)$&$1\textrm{N}\textrm{N}\textrm{N}$ $(c,e)$&$\textrm{N}1\textrm{N}\textrm{N}$ $(d,f)$& BNCN $b$\\
$2$&$\textrm{N}\textrm{N}0\textrm{N}$ $e$&$\textrm{N}\textrm{N}0\textrm{N}$ $f$& $\textrm{N}0\textrm{N}\textrm{N}$ $g$&$\textrm{N}\textrm{N}0\textrm{N}$ $(a,g)$&$\textrm{N}\textrm{N}\textrm{N}0$ $(b,g)$&$\textrm{N}\textrm{N}\textrm{N}0$ $(c,g)$&$\textrm{N}\textrm{N}0\textrm{N}$ $(d,g)$& NBCN $c$\\
$3$&$\textrm{N}\textrm{N}1\textrm{N}$ $e$&$\textrm{N}\textrm{N}1\textrm{N}$ $f$& $\textrm{N}1\textrm{N}\textrm{N}$ $g$&$\textrm{N}\textrm{N}1\textrm{N}$ $(a,g)$&$\textrm{N}\textrm{N}\textrm{N}1$ $(b,g)$&$\textrm{N}\textrm{N}\textrm{N}1$ $(c,g)$&$\textrm{N}\textrm{N}1\textrm{N}$ $(d,g)$& BNNC $d$\\ \hline
& \multicolumn{3}{c|}{}&\multicolumn{4}{c|}{}& \\ 
\end{tabular}
\label{Tab:Ex_Code}
}
\subtable[Illustration how to use the code]{
\begin{tabular}{c|| c| c| c| c| c| c| c} \hline
block $i$&$w_0(i)$&$w_2(i)$&  $\mathbf{x}_0(i)$ &$\mathbf{x}_1(i)$& $\mathbf{x}_2(i)$ & $\hat{w}_0(i)$ & $\hat{w}_2(i)$\\ \hline\hline
$\parbox[0pt][1em][c]{0cm}{}i=1$&$3$&-& $\textrm{N}\textrm{N}1\textrm{N}$ &$\textrm{NNNN}$& $\textrm{NNNN}$ & - & -\\ 
$\parbox[0pt][1em][c]{0cm}{}i=2$&$1$&-& $1\textrm{N}\textrm{N}\textrm{N}$ &$\textrm{N}\textrm{N}1\textrm{N}$& $\textrm{NNNN}$ & - & - \\ 
$\parbox[0pt][1em][c]{0cm}{}i=3$&$2$&$0$& $\textrm{N}\textrm{N}0\textrm{N}$ &$\textrm{N}1\textrm{N}\textrm{N}$&$0\textrm{N}\textrm{N}0$ & $3$ &$0$\\ 
$\parbox[0pt][1em][c]{0cm}{}i=4$&-&$2$& $\textrm{NNNN}$ &$\textrm{N}\textrm{N}\textrm{N}0$& $1\textrm{N}0\textrm{N}$ & $1$ &$2$\\ 
$\parbox[0pt][1em][c]{0cm}{}i=5$&-&$3$& $\textrm{NNNN}$ &$\textrm{NNNN}$& $\textrm{N}11\textrm{N}$ & $2$ &$3$\\ \hline
\end{tabular}
\label{cod_example_2}
}
\end{center}
\end{table*}

Next, $\mathcal{C}_{1}$ is considered. Recall that $\mathcal{C}_{1}$ has to be constructed such that node~$1$ is able to represent one out of four possible source node~$0$ messages per block independently from the transmission pattern used by node~$2$ in the same block. Hence, four codewords per transmission pattern~$a$, $b$, $c$ and $d$ have to be constructed. Take, for instance, pattern~$a$. When node~$2$ uses pattern~$a$, node~$1$ can encode its information in slots one and three. The following mapping $w_0 \mapsto (x_{1,1}, x_{1,3})$ is chosen for encoding where $x_{1,1}, x_{1,3} \in \{0,1,\textrm{N}\}$ indicate the symbols used by node~$1$ in slots one and three: $0 \mapsto (0,\textrm{N})$, $1 \mapsto (1,\textrm{N})$, $2 \mapsto (\textrm{N},0)$, $3 \mapsto (\textrm{N},1)$. Note that this mapping includes timing. By allocating each of the four values of $(x_{1,1}, x_{1,3})$ to the listen slots of pattern $a$ and, further, by requiring that node $1$ is quiet when node $2$ transmits (i.e. allocating ``\textrm{N}'' to slots~$2$ and $4$), we obtain the codewords in the first column of $\mathcal{C}_{1}$. Applying the same procedure to pattern~$b$, $c$ and $d$ yields column two, three and four of $\mathcal{C}_1$. The label $(r,s)\in \{a,b,c,d\}\times\{e,f,g\}$ next to each codeword in~ $\mathcal{C}_{1}$ has the following meaning. The first color indicates the transmission pattern in~$\mathcal{C}_{2}$ from which the codeword was constructed while the second color indicates the transmission pattern of the codeword in~$\mathcal{C}_{1}$. 

Finally, we consider $\mathcal{C}_0$. In each transmission block, source node $0$ can use three time slots $t_1$, $t_2$ and $t_3$ for encoding since node $1$ sends once per block. Let $x_{0,t_1}, x_{0,t_2}, x_{0,t_3} \in \{0,1,\textrm{N}\}$ denote the symbols used by node~$0$ for encoding a particular message $w_0 \in \mathcal{W}_0$. We use the mapping $w_0 \mapsto (x_{0,t_1}, x_{0,t_2}, x_{0,t_3})$ for encoding where $0 \mapsto (0,\textrm{N},\textrm{N})$, $1 \mapsto (1,\textrm{N},\textrm{N})$, $2 \mapsto (\textrm{N},0,\textrm{N})$, $3 \mapsto (\textrm{N},1,\textrm{N})$. Again, the mapping includes timing. Now, by allocating all possible values of $(x_{0,t_1}, x_{0,t_2},x_{0,t_3})$ to the listen slots of codewords in $\mathcal{C}_1$ whose second color is $s\in \{e,f,g\}$ and, further, by requiring that node~$0$ is quiet when node $1$ transmits, we obtain all codewords in $\mathcal{C}_{0}$ which are colored with $s$. It should be noted that merely four from $27$ possible sequences are used in the mapping $w_0 \mapsto (x_{0,t_1}, x_{0,t_2}, x_{0,t_3})$. Hence, $\mathcal{C}_0$ could be designed such that node~$0$ is able to send $\lfloor27/4\rfloor$ additional messages to a sink at node~$1$ at a rate of $0.6462$~b/u. 

Observe that adjacent nodes are able to cooperate since each node knows the message(s) to be forwarded by the next node as well as the coding strategy applied by the next node. Hence, a node is always aware of the codeword used by the next node and, therefore, can pick a codeword from the correct column of its codebook. In particular, the codewords for block~$i$ are picked as follows. The encoder at node~$0$ determines, based on message $w_0(i-2)$, the color of $\mathbf{x}_2(i)$ and, therefore, knows the first color~$r$ of codeword $\mathbf{x}_1(i)$. Then, based on this information, the encoder at node~$0$ determines the second color $s$ of $\mathbf{x}_1(i)$ by means of $w_0(i-1)$. This color tells node~$0$ from which column in $\mathcal{C}_0$ $\mathbf{x}_0(i)$ has to be picked, namely from a column whose codewords are colored with $s$. The precise choice within the picked column depends on the new source message $w_0(i)$. Similarly, the encoder at node~$1$ determines, based on message $w_0(i-2)$, color~$r$ of $\mathbf{x}_2(i)$ and, therefore, knows that $\mathbf{x}_1(i)$ has to be picked from a column of $\mathcal{C}_1$ whose entries have~$r$ as their first color. The precise choice within the column depends on message $w_0(i-1)$. The encoder at node~$2$ knows $\left \{w_0(i-2),w_2(i)\right\}$ at the beginning of block~$i$. Message $w_0(i-2)$ tells him which transmission pattern to use in $\mathbf{x}_2(i)$ while $w_2(i)$ determines the transmission symbols. 

We conclude the example by demonstrating how the codebooks $\mathcal{C}_0$, $\mathcal{C}_1$ and $\mathcal{C}_2$ have to be used such that source node~$0$ is able to transmit messages $3,1,2$ to the sink while relay source~$2$ transmits messages $0,2,3$ to the sink. Note that the transmission strategy includes the arrangement that a node picks its very first codeword from the first column of its codebook. The result is shown in Table~\ref{cod_example_2}. 

\subsection{Rate Region}
\label{rate_region}
We now determine an achievable rate region $\mathcal{R}$ from the expressions derived in section \ref{sec:coding_idea}. All logarithms which will be used in the following are to base~$2$. As usual, $R_{i} = \log|\mathcal{W}_{i}|/n$. In order to avoid tedious case distinctions we assume $R_0>0$ in the remainder. Hence, $0< n_i <n$ for all $1 \leq i \leq m-1$. This is without loss of generality since the rate region of a cascade with $R_0=0$ is equal to the rate region of the shortened cascade where the first node with a rate greater zero is made to node~$0$. The following abbreviations are used for the portion of time in which relay~$i$ listens or transmits: $p_i\equiv n^{-1}(n-n_i)$ and $\bar{p}_i\equiv 1-p_i$. Observe that for $1 \leq i \leq m-1$
{\setlength{\arraycolsep}{0.0em}
\begin{eqnarray}
\label{constraint_on_pi}
0 < p_i &{}<{}& 1\\
\label{constraint_on_pi_pi+1}
p_i+p_{i+1}&{}\geq{}& 1\\
\label{constraint_on_pm}
p_m &{}={}&1
\end{eqnarray}} 
since $n_i\leq n-n_{i+1}$ and $n_m=0$ due to the code construction. The set of points characterized by (\ref{constraint_on_pi}) to (\ref{constraint_on_pm}) will be denoted as $\mathcal{P}^\star\subset \mathbb{R}^m$. By identifying $p_i$ with $p_{X_i}(N)$, we can regard $\mathcal{P}^\star$ as a subset of the joint probability distributions $p_{X_0 \dots X_m}$. Obviously, all distributions in $\mathcal{P}^\star$ factorize as $p_{X_0}p_{X_1|X_0}\dots p_{X_m|X_{m-1}}$.

The method of types~\cite{Csi98} provides important tools for relating combinatorial expressions to information theoretic expressions. An example which will be useful for the problem considered here is~\cite[Th. 1.4.5]{Lin99}
\begin{equation}
n^{-1}\log{n\choose n_{i}} = H\left (p_i\right) + o(1) \quad \textrm{for }n\rightarrow \infty
\label{binom_as_entropy}
\end{equation} 
where $H(p_i)$ denotes the binary entropy function evaluated at $p_i=n^{-1}(n-n_i)$. Using (\ref{binom_as_entropy}), we obtain from (\ref{|W_0|}) to (\ref{|W|-constraint}) for $n \rightarrow \infty$
{\setlength{\arraycolsep}{0.0em}
\begin{eqnarray}
\label{R_0}
R_0 &{}\leq{}& p_1\log(q+1) \\
\label{R_i}
\sum_{i=0}^{v} R_i &{}\leq{}& \bar{p}_v\log q+p_{v+1}H\left (\bar{p}_v p_{v+1}^{-1}\right) +o(1)\\
\label{coding_constraint}
R_{v} &{}\leq{}& \bar{p}_v\log q
\end{eqnarray}}
where $ v \in \mathcal{V}\setminus{\{0,m\}}$. As an aside, (\ref{R_i}) results from adding the logarithm of (\ref{|W_relay_source|}) to the logarithm of (\ref{|W|-constraint}), dividing the result by $n$ and applying (\ref{binom_as_entropy}). Inequality (\ref{R_i}) is well-defined since $p_{v+1}\neq 0$ and $\bar{p}_v p_{v+1}^{-1}\in (0,1]$ due to (\ref{constraint_on_pi}) to (\ref{constraint_on_pm}). 

The achievable rate region for $n \rightarrow \infty$ is given by
\begin{equation}
\mathcal{R}=\mathit{Co}\left ( \bigcup_{p \in \mathcal{P}^\star}\mathcal{R}_p\right)
\label{Rate_region_rough}
\end{equation}
where $\mathcal{R}_p$ indicates the region resulting from (\ref{R_0}) to (\ref{coding_constraint}) for a particular point $p \in \mathcal{P}^\star$ while the convex hull $\mathit{Co}(\cdot)$ takes time-sharing between different regions $\mathcal{R}_p$ into account.

Conditions (\ref{R_0}) to (\ref{coding_constraint}) are merely another formulation of conditions (\ref{|W_0|}) to (\ref{|W|-constraint}) for $n\rightarrow \infty$. Since we can construct codebooks of the size stated in (\ref{|W_0|}) to (\ref{|W|-constraint}) by means of the outlined procedure, it immediately follows that the rates due to (\ref{R_0}) to (\ref{coding_constraint}) are achievable and, thus, the conditions are sufficient.

\section{Capacity Results} \label{Section4}
In this section we shall investigate the optimality of the coding strategy. We will make use of the following notation. The complement of a set $S$ within an ambient set is denoted as $S^c$, the power set of a set $S$ is denoted as $\mathcal{P}(S)$ and $X_S:=\{X_i:i \in S\}$ indicates a set of random variables. Further, $\mathbf{R}$ is a $|\mathcal{V}|-1$-dimensional rate vector with $R_v$ as its $v$th entry. We will use pmf as acronym for probability mass function.

A well-known result, which bounds the rate of information flow from nodes in $S^c$ to nodes in $S$ is the so-called \textit{cut-set bound}.
\begin{lem}[Cut-Set Bound]{\cite[chap.~14.10]{CovTho91}}
Consider a general multiterminal network composed of $m+1$ nodes and channel $p_{Y_0 \dots Y_m|X_0 \dots X_m}$. $R_{ij}$ denotes the transmission rate between two nodes $i$ and $j$. If the information rate $\left(R_{ij}\right)$ is achievable, then there is some joint probability distribution $p_{X_0 \dots X_{m}}$, such that
\begin{equation}
\sum_{i\in S^c, j \in S} R_{ij}\leq I\left(X_{S^c};Y_{S}|X_{S} \right),
\end{equation}
for all $S \subset \{0,\dots,m\}$. 
\label{Lemma1}
\end{lem}
\begin{lem}
Consider a noise-free relay cascade as described in section~\ref{Network_Model}. If the information rate $\left(R_v\right)$ is achievable, then there is some joint probability distribution $p_{X_v\dots X_m}$, such that 
\begin{equation}
\sum_{k=0}^{v}R_k \leq  \max_{p_{X_v\dots X_m}}\min_{v+1\leq i\leq m} H(Y_i|X_i)
\end{equation}
for all $v\in \mathcal{V}\setminus \{m\}$.
\label{Lemma2}
\end{lem}
\begin{proof}
We determine a sufficient subset from the set of all possible network cuts. An upper bound on the sum rate $\sum_{k=0}^{v}R_{k}$ due to Lemma~\ref{Lemma1} is given by
\begin{equation}
\sum_{k=0}^{v}R_{k} \leq \max_{p_{X_v\dots X_m}} \min_{S \in \mathcal{M}} I(X_v,X_{S^c};Y_{S},Y_{m}|X_{S}),
\label{cutset_bound_single_SD}
\end{equation}
where $\mathcal{M} = \mathcal{P}\left(\{v+1,\dots,m-1\}\right)$ and $S^c$ is the complement of $S$ in $\{v+1,\dots,m-1\}$. We further have
\begin{equation}
I(X_v,X_{S^c};Y_{S},Y_{m}|X_{S}) = H(Y_{S},Y_{m}|X_{S})
\label{cut-set_bound_deterministic}
\end{equation}
since the network is deterministic. Now suppose that $S$ is nonempty and let $i \in \{v+1,\dots,m-1\}$ denote the smallest integer in $S$. By the chain rule for entropy, we can expand $H(Y_{S},Y_{m}|X_{S})$ as
{\setlength{\arraycolsep}{0.0em}
\begin{eqnarray}
\label{cut_set_expansion}
H(Y_{S},Y_{m}|X_S) &{}={}& H(Y_i|X_S) + H(Y_{S\setminus\{i\}}|X_S,Y_i) \nonumber\\ 
&&{+}\:H(Y_{m}|X_S,Y_S)\\ 
\label{cut_set_expansion_lb}
&{}\geq{}& H(Y_i|X_S). \nonumber
\end{eqnarray}}
For each cut $S$ with smallest entry~$i$, a cut called $S_i$ can be found such that $H(Y_{S_i},Y_{m}|X_{S_i})$ is less than or equal to $H(Y_{S},Y_{m}|X_{S})$. Simply choose $S_i:=\{i,\dots,m-1\}$. This eliminates the second and third term on the right hand side of (\ref{cut_set_expansion}) due to the underlying channel model (\ref{relay_ch_model}). Further, since $S \subseteq S_i$ we have $H(Y_i|X_S) \geq H(Y_i|X_{S_i})$. Thus, each non-empty cut $S$ with smallest element $i$ is dominated by $S_i$ in terms of delivering a smaller entropy value. Finally, $S =\emptyset$ has to be considered in (\ref{cut-set_bound_deterministic}) which yields $H(Y_{m})$. To sum up, $\sum_{k=0}^{v}R_{k}$ is upper bounded by\footnote{Note that $H(Y_m)=H(Y_m|X_m)$. For notational convenience, we will always use $H(Y_m|X_m)$.}
{\setlength{\arraycolsep}{0.0em}
\begin{eqnarray}
\label{C_upper_bound_1steq}
\sum_{k=0}^{v}R_{k} &{}\leq{}& \max_{p_{X_v\dots X_m}} \min_{v+1 \leq i \leq m} H(Y_i|X_{S_i}) \\
&{}\leq{}& \max_{p_{X_v\dots X_m}} \min_{v+1 \leq i \leq m} H(Y_i|X_i)
\label{C_upper_bound}
\end{eqnarray}}
where the last inequality follows from the fact that conditioning does not increase entropy.
\end{proof}
\begin{thm}
The capacity of a noise-free relay cascade with a single source-destination pair (namely nodes $0$ and $m$) and $m-1$ half-duplex constrained relays is given by
\begin{equation}
C_{m-1}(q)=\max_{p_{X_0 \dots X_m}}\min_{1\leq i \leq m}H(Y_i|X_i)
\label{eq:Corollary1}
\end{equation}
where the maximization is over all $p_{X_0 \dots X_m}$  as shown in Table~\ref{p_Xv-1_Xv} and \ref{p_X0_X1} and $q$ equals the number of transmission symbols. Under consideration of the optimal input distribution stated in Table~\ref{p_Xv-1_Xv} and \ref{p_X0_X1}, (\ref{eq:Corollary1}) becomes (\ref{eq:th1_explicit})
\begin{figure*}[!b]
\vspace*{4pt}
\hrulefill
\begin{equation}
\label{eq:th1_explicit}
C_{m-1}(q)=\max_{p_1,\dots,p_{m-1}}\min \left\{p_1\log(q+1),\min_{1\leq i \leq m-1} \left\{\bar{p}_{i}\log q+p_{i+1}H\left(\bar{p}_i p_{i+1}^{-1}\right)\right\}\right\} 
\end{equation}
\end{figure*}
where $0 < p_i < 1$, $p_{m}=1$ and $p_i + p_{i+1} \geq 1$ for all $i \in \{1,\dots,m-1\}$.
\label{Theorem1}
\end{thm}
\begin{proof}
By Lemma~\ref{Lemma2} we have
\begin{equation}
C_{m-1}(q)\leq\max_{p_{X_0 \dots X_m}}\min_{1\leq i \leq m}H(Y_i|X_i).
\label{upper_bound_proof_th1}
\end{equation}
The opposite direction of (\ref{upper_bound_proof_th1}) is shown as follows. Consider the marginal pmf $p_{X_0X_1},\dots,$ $p_{X_{m-1}X_m}$ given in Table~\ref{p_Xv-1_Xv} and \ref{p_X0_X1}. We show that these functions are optimal in terms of maximizing $H(Y_i|X_i)$, $i \geq 1$.
\begin{table}[th]
\renewcommand{\arraystretch}{1.3}
\begin{center}
\caption{}
\subtable[Optimal $p_{X_{i-1}X_i}$ for $2 \leq i \leq m$]{
\begin{tabular}{c|c c c c}
\multicolumn{1}{c}{\backslashbox{\normalsize{$X_{i-1}$}}{\normalsize{$X_i$}}}&$0$ &$\cdots$ &$q-1$ & N \\ \cline{2-5}
$0$& $0$ & $\cdots$ & $0$ & $\bar{p}_{i-1}/q$\\ 
$\vdots$& $\vdots$&$\ddots$ &$\vdots$ &$\vdots$\\ 
$q-1$& $0$ & $\cdots$ & $0$ &$\bar{p}_{i-1}/q$\\ 
N&$\bar{p}_{i}/q$ & $\cdots$ &$\bar{p}_{i}/q$ &$p_i-\bar{p}_{i-1}$ \\ 
\end{tabular}
\label{p_Xv-1_Xv}}
\subtable[Optimal $p_{X_0X_1}$]{
\begin{tabular}{c|c c c c}
\multicolumn{1}{c}{\backslashbox{\normalsize{$X_{0}$}}{\normalsize{$X_1$}}} &$0$ &$\cdots$ &$q-1$ & N \\ \cline{2-5}
$0$& $0$ & $\cdots$ & $0$ & $p_1/(q+1)$\\ 
 $\vdots$& $\vdots$&$\ddots$ &$\vdots$ &\multirow{2}{*}{$\vdots$}\\ 
 $q-1$& $0$ & $\cdots$ & $0$ &\\ 
 N&$\bar{p}_1/q$ & $\cdots$ &$\bar{p}_1/q$ & $p_1/(q+1)$\\ 
\end{tabular}
\label{p_X0_X1}}
\end{center}
\end{table}
The zero probabilities in Table~\ref{p_Xv-1_Xv} and \ref{p_X0_X1} result from the following well-known fact~\cite[Def.~3]{VanMeu92}: a channel input can be neglected if it produces the same channel output as another channel input and this with the same probabilities. Consider e.g. the first column in Table~\ref{p_Xv-1_Xv}. For all $k\in \mathcal{X}$, the inputs $(X_{i-1},X_i)=(k,0)$ produce $Y_i=0$ with probability $1$. Hence, all but one input can be neglected. Applying the same consideration to the second till $q$th column yields that only one non-zero entry remains in each of the first $q$ columns of Table~\ref{p_Xv-1_Xv} and \ref{p_X0_X1}. Let us now address the last column of Table~\ref{p_Xv-1_Xv}. Recall that a permutation of the transmission symbols $x_{i-1} \in \mathcal{Q}$ still yields the same information flow between two nodes~$i-1$ and $i$. Hence, $p_{X_{i-1}X_i}(k,\textrm{N})=p_{X_{i-1}X_i}(l,\textrm{N})$ can be chosen for all $k,l \in \mathcal{Q}$. Considering the relative frequency $\bar{p}_{i-1}$ of transmission symbols used by node~$i-1$, we have $p_{X_{i-1}X_i}(k,\textrm{N})=\bar{p}_{i-1}/q$ for all $k \in \mathcal{Q}$ where $2 \leq i \leq m$. 

In order to achieve the maximum information flow from source node~$0$ to relay~$1$, the source has to encode with uniformly distributed input symbols when relay~$1$ listens, i.e. $p_{X_0X_1}(k,\textrm{N})=p_{X_0X_1}(l,\textrm{N})$ for all $k,l \in \mathcal{X}$. By taking this additional constraint into account, we obtain the last column of Table~\ref{p_X0_X1}. 

The constraints on $p_i$, which are stated in the last line of the Theorem, are necessary in order to guarantee that Table~\ref{p_Xv-1_Xv} and \ref{p_X0_X1} are proper probability mass functions. It is now fairly easy to check that the following equalities hold
{\setlength{\arraycolsep}{0.0em}
\begin{eqnarray}
\label{H(X_0|X_1)}
H(Y_1|X_1)&{}={}&p_1\log(q+1) \\
\label{H(X_v|X_{v+1})}
H(Y_{i+1}|X_{i+1})&{}={}&\bar{p}_{i}\log q+p_{i+1}H\left(\bar{p}_i p_{i+1}^{-1}\right)
\end{eqnarray}}
for all $ 1\leq i \leq m-1$. Observe that the set of probability mass functions defined by Table~\ref{p_Xv-1_Xv} and \ref{p_X0_X1} is equal to $\mathcal{P}^\star$, i.e. the set of empirical distributions due to the code construction defined by (\ref{constraint_on_pi}) to (\ref{constraint_on_pm}). Further, by assumption, $R_i=0$ for all $i\in \mathcal{V}\setminus\{0\}$. Then, a comparison of (\ref{H(X_0|X_1)}) and (\ref{H(X_v|X_{v+1})}) with (\ref{R_0}) and (\ref{R_i}) reveals that $\min_{1\leq i \leq m}H(Y_i|X_i)$ is an achievable rate. Hence, the capacity is lower bounded by 
\begin{equation}
C_{m-1}(q)\geq\max_{p_{X_0 \dots X_m}}\min_{1\leq i \leq m}H(Y_i|X_i)
\label{lower_bound_proof_th1}
\end{equation}
where the maximization is with respect to Table~\ref{p_Xv-1_Xv} and \ref{p_X0_X1}. Inequality (\ref{lower_bound_proof_th1}) together with (\ref{upper_bound_proof_th1}) proves (\ref{eq:Corollary1}). Replacing the conditional entropies in (\ref{eq:Corollary1}) by (\ref{H(X_0|X_1)}) and (\ref{H(X_v|X_{v+1})}) gives (\ref{eq:th1_explicit}). 
\end{proof}
\begin{rem}
\begin{remenum}
\item A more intuitive explanation of the zero probability assignment in Table~\ref{p_Xv-1_Xv} and \ref{p_X0_X1} is the following. Assume relay $i$ is transmitting, i.e. $X_i \in \mathcal{Q}$. According to the underlying channel model, relay $i$ is not able to listen to the input of node~$i-1$ and, consequently, node $i-1$ should not transmit when node~$i$ transmits.
\item One could ask why the channel inputs $(X_{i-1},X_i)=(k,\textrm{N})$, $k \in \mathcal{Q}$ and $(X_{i-1},X_i)=(\textrm{N},\textrm{N})$ have equal probability mass for $i=1$ but not necessarily for $i>1$ since for $i>1$ the information flow between relay~$i-1$ and $i$ should also be maximized. However, in contrast to the source node, relay~$i-1$ receives information. The amount of received information depends on the fraction of listening time provided by relay~$i-1$. Thus, choosing uniformly distributed inputs  $(X_{i-1},X_i)=(k,\textrm{N})$, $k \in \mathcal{X}$, maximizes the rate on link $(i-1)\times i$ but eventually reduces the rate on link $(i-2)\times(i-1)$.
\item Capacity expression (\ref{eq:Corollary1}) in Theorem~\ref{Theorem1} could also have been obtained by applying the decode-forward rate of Xie and Kumar~\cite{XieKum05} to the model considered in this paper. However, we show achievability by a constructive argument while Xie and Kumar use a random coding argument in their proof.
\end{remenum}
\end{rem}

The capacity of a single source line network with an infinite number of half-duplex constrained relays is stated in Theorem~\ref{Theorem2}.
\begin{thm}
For $m\rightarrow \infty$, i.e. for an unbounded number of relays, and $q$ transmission symbols, the capacity of the noise-free and half-duplex constrained relay cascade with a single source-destination pair is equal to
\begin{equation}
C_\infty(q)=\log\left (\frac{1+\sqrt{4q+1}}{2}\right) \textrm{ b/u}.
\label{C_infinit}
\end{equation}
\label{Theorem2}
\end{thm}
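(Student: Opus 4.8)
The plan is to evaluate the $m\to\infty$ limit of the single--source formula $C_{m-1}(q)=\max_{p}\min_{1\le i\le m}H(Y_i|X_i)$ of Corollary~\ref{Corollary1}, using the explicit per--hop values (\ref{H(X_0|X_1)})--(\ref{H(X_v|X_{v+1})}). First I would check that $C_{m-1}(q)$ is non--increasing in $m$, so the limit exists and equals $\inf_m C_{m-1}(q)$: from a vector $(p_1,\dots,p_{m-1})$ optimal for a cascade of $m$ relays, discard $p_{m-1}$ and treat relay $m-1$ as the new sink (set its parameter to $1$); the first $m-2$ hops are unchanged and, since $tH(x/t)$ is increasing in $t$, raising the last parameter to $1$ cannot lower the final hop, so the bottleneck does not drop.

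For achievability I would take the stationary choice $p_i=1-a$ at every relay, $a\in(0,\tfrac12)$. By (\ref{H(X_v|X_{v+1})}) each interior hop then carries $g(a):=a\log_2 q+(1-a)H\!\big(a/(1-a)\big)$, the hop into the sink carries $a\log_2 q+H(a)\ge g(a)$ (again by monotonicity of $tH(x/t)$), and the source hop carries $(1-a)\log_2(q+1)$. After verifying $(1-a)\log_2(q+1)\ge g(a)$ near the maximiser $a^\ast$ of $g$ --- an elementary but slightly delicate inequality --- the bottleneck equals $g(a)$, whence $C_{m-1}(q)\ge g(a)$ for every $m$ and $C_\infty(q)\ge\sup_a g(a)=g(a^\ast)$. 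Setting $g'(a)=0$ gives $q(1-2a)^2=a(1-a)$, i.e.\ $(4q+1)a^2-(4q+1)a+q=0$, so $a^\ast=\tfrac12-\tfrac{1}{2\sqrt{4q+1}}$; substituting back and using the stationarity relation to simplify yields $2^{g(a^\ast)}=\sqrt{q(1-a^\ast)/a^\ast}=\tfrac{1+\sqrt{4q+1}}{2}$, the right--hand side of (\ref{C_infinit}).

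For the converse, Corollary~\ref{Corollary1} gives $C_{m-1}(q)=\max_p\min_i H(Y_i|X_i)$. Taking a $p$ with rational components near the maximum (these suffice by continuity) and $n$ so that $n_i:=n\bar p_i\in\mathbb Z$, (\ref{binom_as_entropy}) gives, hop by hop, $2^{nR}\le 2^{nH(Y_{i+1}|X_{i+1})}=q^{n_i}\binom{n-n_{i+1}}{n_i}2^{o(n)}$ with $R=\min_iH(Y_i|X_i)$ and $n_m=0$, so multiplying the $m-1$ hop bounds, $2^{(m-1)nR}\le\big(\prod_{i=1}^{m-1}q^{n_i}\binom{n-n_{i+1}}{n_i}\big)2^{o(n)}$. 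The crux is the identity $\sum_{n_1,\dots,n_{m-1}}\prod_{i=1}^{m-1}q^{n_i}\binom{n-n_{i+1}}{n_i}=L_{m-1}^{\,n}$, where $L_k$ is the number of independent sets of the path $P_k$ carrying a $q$--ary label on each chosen vertex: the left side counts, grouped by row sums, all valid transmission arrays of $m-1$ relays over $n$ time slots, and counting them column by column each column is exactly such a labelled independent set. Hence every product term is $\le L_{m-1}^{\,n}$, and since $L_k$ obeys the Fibonacci--type recursion $L_k=L_{k-1}+qL_{k-2}$ ($L_0=1$, $L_1=q+1$) one has $L_k\le c\lambda^k$ with $\lambda=\tfrac{1+\sqrt{4q+1}}{2}$ the dominant root of $x^2=x+q$; letting $n\to\infty$ then $m\to\infty$ gives $C_\infty(q)=\lim_m C_{m-1}(q)\le\lim_m\tfrac{\log_2 L_{m-1}}{m-1}=\log_2\lambda$. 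Together with the lower bound, $C_\infty(q)=\log_2\tfrac{1+\sqrt{4q+1}}{2}$. I expect the counting identity --- and recognising that its associated recursion has dominant root precisely the expression in (\ref{C_infinit}), equal to the golden ratio when $q=1$ --- to be the main obstacle; the remaining steps are the quoted results plus routine calculus.
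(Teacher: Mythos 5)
Your proposal is correct in substance, and while its achievability half coincides with the paper's endgame, its converse takes a genuinely different route. For achievability you plug the stationary structured pmf of Table~\ref{p_Xv-1_Xv} with $p_i\equiv 1-a$ into Corollary~\ref{Corollary1}, maximize the interior-hop entropy $g(a)$ (your quadratic $(4q+1)a^2-(4q+1)a+q=0$ and the value $2^{g(a^\ast)}=\sqrt{q(1-a^\ast)/a^\ast}$ agree with (\ref{App:C_infinity}) and (\ref{Proof_Th2:listen_fraction})), and check that the source hop is not the bottleneck; this is exactly what the paper does at the end of its appendix. For the upper bound the paper instead proves a structural statement: using monotone convergence of $C_m(q)$, equalization of the hop entropies at the optimum, and a case analysis ruling out two distinct accumulation points of $p_m$, it forces $|p_m-p_{m+1}|\to 0$, so the optimizer becomes asymptotically stationary and the capacity is the maximum of the stationary expression. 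You bypass this entirely and bound $\min_i H(Y_i|X_i)$ combinatorially via the identity $\sum_{n_1,\dots,n_{m-1}}\prod_{i=1}^{m-1}q^{n_i}\binom{n-n_{i+1}}{n_i}=L_{m-1}^{\,n}$, where $L_k$ counts $q$-labelled independent sets of the path and obeys $L_k=L_{k-1}+qL_{k-2}$ with dominant root $\lambda=\frac{1+\sqrt{4q+1}}{2}$; the identity is correct (count the admissible transmit arrays of the $m-1$ relays column by column versus grouped by row sums), and letting $n\to\infty$ and then $m\to\infty$ gives $C_\infty(q)\le\log_2\lambda$. This transfer-matrix-style converse is arguably cleaner than the paper's limiting-stationarity argument, at the price of relying on the reduction to the structured pmfs of Tables~\ref{p_X0_X1} and \ref{p_Xv-1_Xv} so that $H(Y_{i+1}|X_{i+1})$ really is the exponent of $q^{n_i}\binom{n-n_{i+1}}{n_i}$ --- the same reduction the paper's own proof takes for granted. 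The one step you flag but do not carry out, $(1-a^\ast)\log_2(q+1)\ge g(a^\ast)$, is precisely inequality (\ref{App_last_step_start}); it is true but not immediate (the paper needs the substitution leading to (\ref{App_simplified_equation}), a concavity argument, and a finite check for $q\le 5$), so in a complete write-up it must be proved rather than asserted.
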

\begin{proof}
Theorem \ref{Theorem2} is proved in the Appendix.
\end{proof}
\begin{rem} 
\begin{remenum}
\item $C_\infty(q)$ is achieved by the input pmf given in Table~\ref{p_Xv-1_Xv} where 
\begin{equation}
p_i = \frac{1}{2}\left( 1+\frac{1}{\sqrt{4q+1}} \right)
\label{p_i_Th2}
\end{equation}
for all $i \geq 1$. Another optimal input pmf is characterized by Table~\ref{p_X0_X1} and Table~\ref{p_Xv-1_Xv} for $i\geq 2$ when $p_i$ is replaced by (\ref{p_i_Th2}) for all $i \geq 1$. This is proved in the Appendix.
\item $C_\infty(1)=0.6942$~b/u is equal to the logarithm of the \textit{golden ratio}. Also remarkable, $C_\infty(2)$ is exactly $1$~b/u.
\item The maximum achievable rates with time-sharing and, thus, no timing are given by $R_{ts}(q)=0.5\log(q+1)$~b/u. For $q=1,2$ we have $0.5$ and $0.7925$~b/u, respectively. Since $C_\infty(q)$ is obviously a lower bound on the capacity of each finite length cascade, a comparison of the time-sharing rates with $C_\infty(1)$ and $C_\infty(2)$ shows that pre-determined time-sharing falls considerably short of the capacity for small transmission alphabets. For very large transmission alphabets the gap between the rates due to time-sharing and timing becomes negligible, i.e. $\lim_{q\rightarrow \infty} \left ( C_\infty(q)-R_{ts}(q)\right)=0$.
\end{remenum}
\end{rem}
Next we state an achievable rate region for a cascade with more than one source. Let $\mathcal{T}$ denote the set of all rate vectors $\mathbf{R}\in\mathbb{R}^m_+$ satisfying
{\setlength{\arraycolsep}{0.0em}
\begin{eqnarray}
\mathcal{T}&{}={}& \bigg\{\mathbf{R}\in\mathbb{R}_+^m:0 \leq \sum_{k=0}^{v}R_k \leq  H(Y_{v+1}|X_{v+1}), \nonumber\\
&&{\hspace{3.9cm}}\forall v\in \mathcal{V}\setminus \{m\}\bigg\}
\label{cut_set_bound_Lem2}
\end{eqnarray}}
which becomes, taking into account Table~\ref{p_Xv-1_Xv} and \ref{p_X0_X1}, (\ref{cut_set_bound_Lem2_explicit})
\begin{figure*}[!b]
\vspace*{4pt}
\hrulefill
\begin{equation}
\label{cut_set_bound_Lem2_explicit}
\mathcal{T}= \left\{\mathbf{R}\in\mathbb{R}_+^m: \left\{\begin{array}{ll} R_0 \leq p_1\log(q+1) &\\ \sum_{k=0}^{v}R_k \leq  \bar{p}_{v}\log q+p_{v+1}H\left(\bar{p}_v p_{v+1}^{-1}\right) & , \quad \forall v\in \mathcal{V}\setminus \{0,m\} \end{array}\right \}\right\}
\end{equation}
\end{figure*}
and let $\mathcal{U}$ denote the set of all $\mathbf{R}\in\mathbb{R}^m_+$ satisfying
{\setlength{\arraycolsep}{0.0em}
\begin{eqnarray}
\mathcal{U}&{}={}&  \bigg\{\mathbf{R}\in\mathbb{R}^m_+: 0< R_0 ,\hspace{0.1cm} 0 \leq R_{v}\leq \bar{p}_v\log q,\nonumber\\
&&{\hspace{3.5cm}}\forall v\in \mathcal{V}\setminus{\{0,m\}}\bigg\}
\label{Rate_Constraint}
\end{eqnarray}}
where $p_v\in(0,1)$, $p_{m}=1$ and $p_v + p_{v+1} \geq 1$.
\begin{thm}
Consider a noise-free relay cascade with $m-1$ half-duplex constrained relays where each relay can act as a source. The achievable rate region  $\mathcal{R}$ due to the timing strategy (see section~\ref{sec:coding_idea}) is given by
\begin{equation}
\mathcal{R} = \mathit{Co}\left (\bigcup_{p_{X_0\dots X_m}}\mathcal{T}\cap \mathcal{U}\right)
\label{Th:Rate_region}
\end{equation} 
where the union is over all assignments $p_{X_0\dots X_m}$ as shown in Table~\ref{p_Xv-1_Xv} and \ref{p_X0_X1}. 
\label{Theorem3}
\end{thm}
\begin{proof}
The input pmf stated in Table~\ref{p_Xv-1_Xv} and \ref{p_X0_X1} is still optimal for the case considered here. Taking into account the resulting entropy functions (\ref{H(X_0|X_1)}) and (\ref{H(X_v|X_{v+1})}), it follows that the achievable rate region~$\mathcal{R}$ due to (\ref{R_0}) to (\ref{Rate_region_rough}) equals (\ref{Th:Rate_region}) as $n \rightarrow \infty$.
\end{proof}
\begin{singlerem}
Observe that $\cup_{p_{X_0\dots X_m}}\mathcal{T}$ with $p_{X_0\dots X_m}$ as shown in Table~\ref{p_Xv-1_Xv} and \ref{p_X0_X1} is equal to the cut-set region (Lemma~\ref{Lemma2}). Thus, all boundary points of $\cup_{p_{X_0\dots X_m}}\mathcal{T}$ that are achievable when the constraints stated in (\ref{Rate_Constraint}) are satisfied are capacity points. This idea will be illustrated for an example in paragraph~\ref{Example_Two_Sources}.
\end{singlerem}

\section{Numerical Examples}\label{Section5}
In this section we shall provide numerical capacity results for various scenarios by means of Theorem~\ref{Theorem1} and Theorem~\ref{Theorem3}. In particular, we show how to obtain the capacity of a half-duplex constrained relay cascade with one source-destination pair for an arbitrary number of relays. Further, in case of a three node relay cascade with source and relay source, an explicit expression of the region due to Theorem~\ref{Theorem3} is derived. 

\subsection{One Source}
\label{Subsec_One_Source}
Let us first consider a relay cascade with $\mathcal{V}=\{0,1,2\}$, $q=2$ and $R_1=0$, i.e. source node $0$ intends to communicate with sink node $2$ via the half-duplex constrained relay $1$. By Theorem~\ref{Theorem1} and the optimum input pmf stated in Table~\ref{p_X0_X1}, we have
\begin{equation}
C_1(2)=\max_{p_{X_0X_1X_2}}\min\left \{p_{X_1}(\textrm{N})\log3,H(X_1)\right\}.
\label{C_single_rel}
\end{equation}
Problem (\ref{C_single_rel}) exhibits a single degree of freedom and is readily solved by finding a $p_{X_1}(\textrm{N})$ which satisfies $p_{X_1}(\textrm{N})\log3=H(X_1)$ (see Fig.~\ref{fig:C_single_relay}). 
\begin{figure*}[!t]
\centering
\psfrag{Pr}{\footnotesize$p_{X_1}(\textrm{N})$}
\psfrag{H(X1)}{\footnotesize$H(X_1)$}
\psfrag{pX1(N)log23space}{\footnotesize$p_{X_1}(\textrm{N})\log3$}
\psfrag{bits per transmission}{\footnotesize b/u}
\epsfig{file=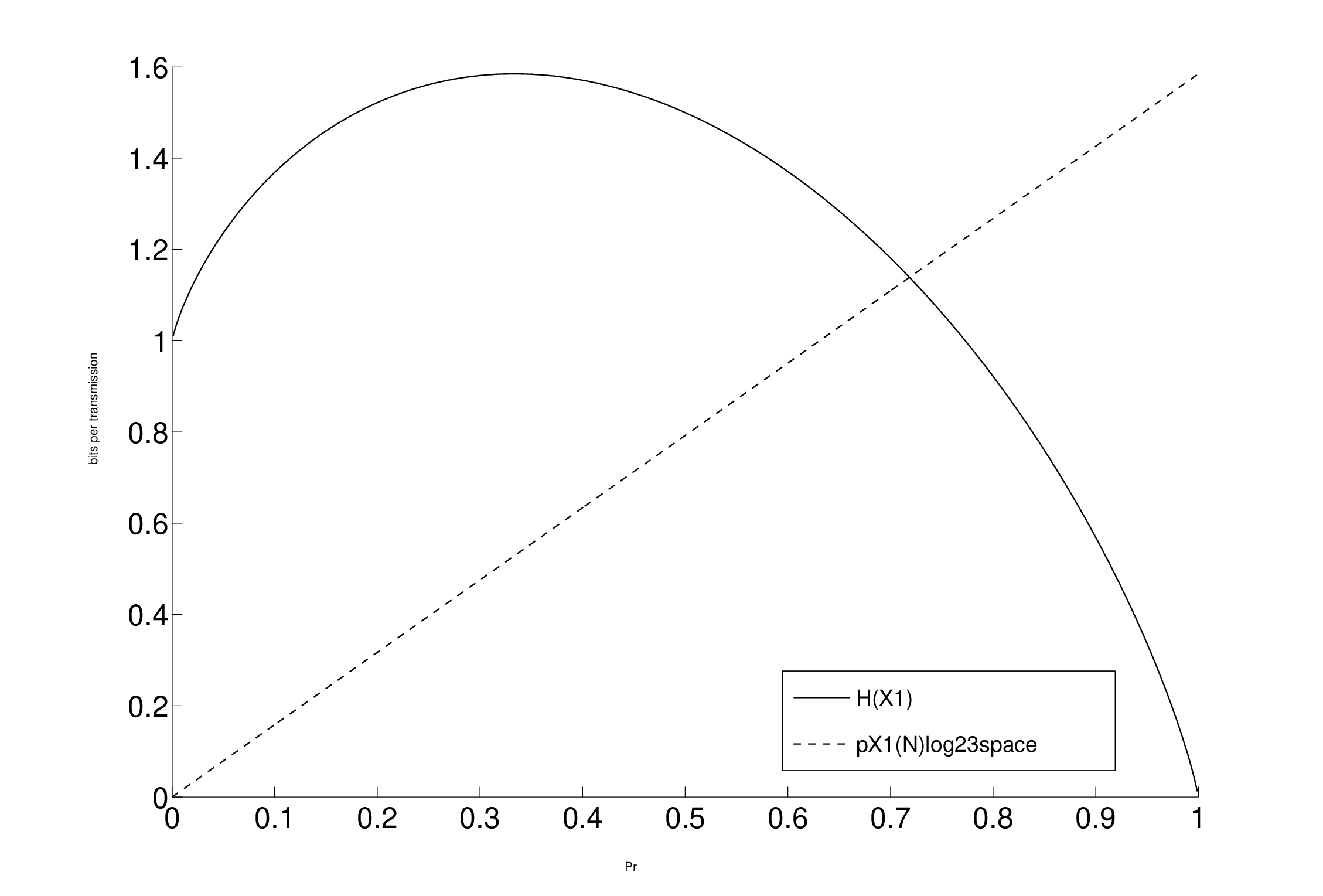,scale=0.3}
\caption{Graphical solution of optimization problem (\ref{C_single_rel}).}
\label{fig:C_single_relay}
\end{figure*}
The optimum value for $p_{X_1}(\textrm{N})$ equals $0.7185$ and results in 
\begin{equation}
C_1(2)=1.1389 \textrm{ b/u}.
\label{eq:1.1389}
\end{equation}
\begin{rem} 
\begin{remenum}
\item Assume the relay does not have the capability to decide whether the source has transmitted or not, i.e. $p_{X_0X_1}(\textrm{N},\textrm{N})=0$. In this case an identical approach shows that the capacity equals $0.8295$~b/u, which is still greater than the time-sharing rate of $0.5\log3\approx 0.7925$ bit per use.
\item For $q=1$, the outlined procedure yields $C_1(1)=0.7729$~b/u achieved by $p_{X_1}(\textrm{N})=0.7729$. The capacity value of this specific case has also been obtained in \cite{VanMeu92}. Therein, the focus was not on half-duplex constrained transmission but on finding the capacity of certain classes of deterministic relay channels. In \cite{Kra07}, the same channel model was considered and the author also noticed that the capacity equals $0.7729$~b/u. A simple coding scheme was outlined which approaches $2/3$~b/u, and extensions using Huffman or arithmetic source coding are claimed.
\end{remenum}
\end{rem}
In order to compute $C_{m-1}(q)$ for $m >2$, we transform (\ref{eq:Corollary1}) into a convex program with linear cost function $H(Y_1|X_1)$ and convex equality constraints $H(Y_1|X_1)-H(Y_{i+1}|X_{i+1}) = 0$ for all $i\in \{1,\dots,m-1\}$. The resulting program reads as
{
\begin{eqnarray}
\textrm{maximize}&& p_1\log(q+1) \nonumber \\
\label{Convex_Program}
\textrm{subject to}&& p_1\log(q+1)-\bar{p}_{i}\log q - p_{i+1}H\left(\bar{p}_i p_{i+1}^{-1}\right)=0\nonumber \\
&&1 - \sum_{j=i}^{i+1} p_j \leq 0 \nonumber\\
&&p_i \in (0,1) \nonumber
\end{eqnarray}
By adopting a standard algorithm for constrained optimization problems, the capacity $C_{m-1}(q)$ was computed for various values of $m$. A brief summary is given in Table~\ref{Tab:Numerical_C}. 
\begin{table}[!t]
\renewcommand{\arraystretch}{1.3}
\caption{Capacity results for cascades composed of $m-1$ half-duplex relays. Row ``TS'' shows the corresponding time-sharing rates}
\begin{center}
\begin{tabular}{c||c|c} 
\hline
$m-1$&$C_{m-1}(1)$&$C_{m-1}(2)$ \\ \hline \hline
$1$&$0.7729$ b/u&$1.1389$ b/u \\ 
$2$&$0.7324$ b/u&$1.0665$ b/u \\ 
$3$&$0.7173$ b/u&$1.0400$ b/u \\ 
$4$&$0.7099$ b/u&$1.0271$ b/u \\ 
$10$&$0.6981$ b/u&$1.0066$ b/u \\ 
$20$&$0.6954$ b/u&$1.0020$ b/u\\ 
$40$&$0.6946$ b/u&$1.0006$ b/u\\ 
$100$&$0.6943$ b/u&$1.0001$ b/u \\ 
$\infty$&$0.6942$ b/u&$1$ b/u \\ \hline\hline
TS&$0.5$ b/u&$0.7925$ b/u \\ \hline
\end{tabular}
\label{Tab:Numerical_C}
\end{center}
\end{table}

\subsection{Two Sources}\label{Example_Two_Sources}
The considered relay network is characterized by $\mathcal{V}=\{0,1,2\}$ and $q=2$. In contrast to the previous example, the relay is allowed to send own information, i.e. $R_1\geq 0$. According to Theorem~\ref{Theorem3}, the achievable rate region~$\mathcal{R}$ is given by the convex hull of
{\setlength{\arraycolsep}{0.0em}
\begin{eqnarray}
\label{Ex_two_sources:R_0}
0 \leq R_0 &{}\leq{}& H(X_0|X_1) \\
\label{Ex_two_sources:sumrate}
0 \leq R_0 + R_1 &{}\leq{}& H(X_1)\\
\label{Ex_two_sources:coding_constraint}
0 \leq R_1&{}\leq{}& \bar{p}_1\\
\label{Ex_two_sources:R_0>0}
R_0 &{}>{}&0
\end{eqnarray}}
together with $(R_0,R_1)=(0,\log3)$ which follows by considering the shortened cascade from the relay to the source. Observe that (\ref{Ex_two_sources:R_0}) and (\ref{Ex_two_sources:sumrate}) correspond to $\mathcal{T}$ while (\ref{Ex_two_sources:coding_constraint}) and (\ref{Ex_two_sources:R_0>0}) correspond to $\mathcal{U}$. 

We will first derive an explicit expression for the boundary of the cut-set region~$\mathcal{T}$. Two cases have to be considered depending on whether an optimum input pmf for the source or the relay source is used. An optimum input pmf for the relay source due to Table~\ref{p_Xv-1_Xv} is shown in Table~\ref{Ex_two_sources:opt_p_relay_source}. It yields the maximum possible sum rate $H(X_1)=\log3$~b/u for all  valid $y$ (i.e. $y\in[0,1/6]$). When $y$ varies from $0$ to $1/6$, we have $0 \leq H(X_0|X_1)\leq \frac{1}{3}\log3$ where $\frac{1}{3}\log3$ corresponds to $y=1/9$. Thus, a part of the cut-set region boundary is given by $R_1=\log3-R_0$ for $0\leq R_0 \leq \frac{1}{3}\log3$.
\begin{table}[th]
\renewcommand{\arraystretch}{1.3}
\caption{Optimal $p_{X_0X_1}$ yielding a sum rate of $\log3$ $b/u$}
\begin{center}
\begin{tabular}{c|c c c}
\multicolumn{1}{c}{\backslashbox{\normalsize{$X_0$}}{\normalsize{$X_1$}}} &$0$ &$1$& N \\ \cline{2-4}
$0$& $0$ & $0$ & $y$\\ 
$1$& $0$ & $0$ & $y$\\ 
N&$1/3$& $1/3$& $1/3-2y$\\ 
\end{tabular}
\label{Ex_two_sources:opt_p_relay_source}
\end{center}
\end{table}
It remains to focus on the interval $\frac{1}{3}\log3 < R_0 \leq C_1(2)=1.1389$~b/u. Using the optimum input pmf for source node $0$ (Table~\ref{p_X0_X1}) and (\ref{H(X_0|X_1)}), we can express $R_1=H(X_1)-R_0$ as shown in (\ref{Ex_two_sources:cut_set_bound_b}). Hence, the boundary of the cut-set region is given by (\ref{Ex_two_sources:cut_set_bound})
\begin{figure*}[!b]
\vspace*{4pt}
\hrulefill
\begin{subnumcases}{\label{Ex_two_sources:cut_set_bound}R_1 = }
\log 3 - R_0 & $0\leq R_0 \leq \frac{1}{3}\log 3\textrm{ b/u}$\label{Ex_two_sources:cut_set_bound_a}\\
H\left( \frac{R_0}{\log 3}\right) + \left(1- \frac{R_0}{\log 3}\right) -R_0 & $\frac{1}{3}\log 3 < R_0 \leq 1.1389\textrm{ b/u}$\label{Ex_two_sources:cut_set_bound_b}
\end{subnumcases}
\end{figure*}
In order to determine $\mathcal{R}$, (\ref{Ex_two_sources:coding_constraint}) must be taken into account. We first check whether points on (\ref{Ex_two_sources:cut_set_bound_a}) are achievable under constraint (\ref{Ex_two_sources:coding_constraint}). Using the probability mass function of Table~\ref{Ex_two_sources:opt_p_relay_source}, it follows from (\ref{Ex_two_sources:coding_constraint}) that $R_1 \leq \frac{2}{3}$~b/u. Hence, no point (except of $(0,\log3)$) is achievable on (\ref{Ex_two_sources:cut_set_bound_a}) since the range of $R_0$ implies that $R_1$ is always greater or equal $\frac{2}{3}\log3$~b/u. Let us now focus on (\ref{Ex_two_sources:cut_set_bound_b}) and recall that Table~\ref{p_X0_X1} is the underlying probability function. Rate points on (\ref{Ex_two_sources:cut_set_bound_b}) which satisfy
\begin{equation}
H(X_0|X_1)+\bar{p}_1 \leq H(X_1).
\label{Ex_two_sources:lower_bound_R0}
\end{equation}
are achievable. Equality in (\ref{Ex_two_sources:lower_bound_R0}) results for $p_1=0.6091$ which gives $R_0=0.9654$~b/u and $R_1=0.3909$~b/u. Since $H(X_0|X_1)+\bar{p}_1$ is linear in $p_1$ while $H(X_1)$ is concave in $p_1$, (\ref{Ex_two_sources:lower_bound_R0}) is satisfied for all $p_1\leq0.6091$. The corresponding rate points are $R_0\geq0.9654$ and $R_1\leq0.3909$~b/u. Thus, $\mathcal{R}$ is given by taking the convex hull of (\ref{Ex_two_sources:cut_set_bound_b}) for $0.9654 \leq R_0 \leq 1.1389$~b/u and the rate vector $(R_0,R_1)=(0,\log3)$~b/u. The cut-set bound, the timing region $\mathcal{R}$ and the region which results from a deterministic time-division schedule (i.e. time-sharing between $(R_0,R_1)=(0.5\log3,0)$ and $(0,\log3)$) is depicted in Fig.~\ref{Fig:Rate_region_two_sources_binary}. 

The derivation reveals that the cut-set bound is achievable for $R_0\geq0.9654$. Moreover, we see that even when the source transmits at a rate beyond the time-sharing rate of $0.5\log3$~b/u, the relay is still able to send its own information at a non-zero rate.
\begin{figure*}[!t]
\psfrag{(a)}{\footnotesize{(a)}}
\psfrag{(b)}{\footnotesize{(b)}}
\psfrag{(c)}{\footnotesize{(c)}}
\psfrag{R0}{\footnotesize{\textrm{$R_0$ (b/u)}}}
\psfrag{R1}{\footnotesize{\textrm{$R_1$ (b/u)}}}
\centering
\epsfig{file=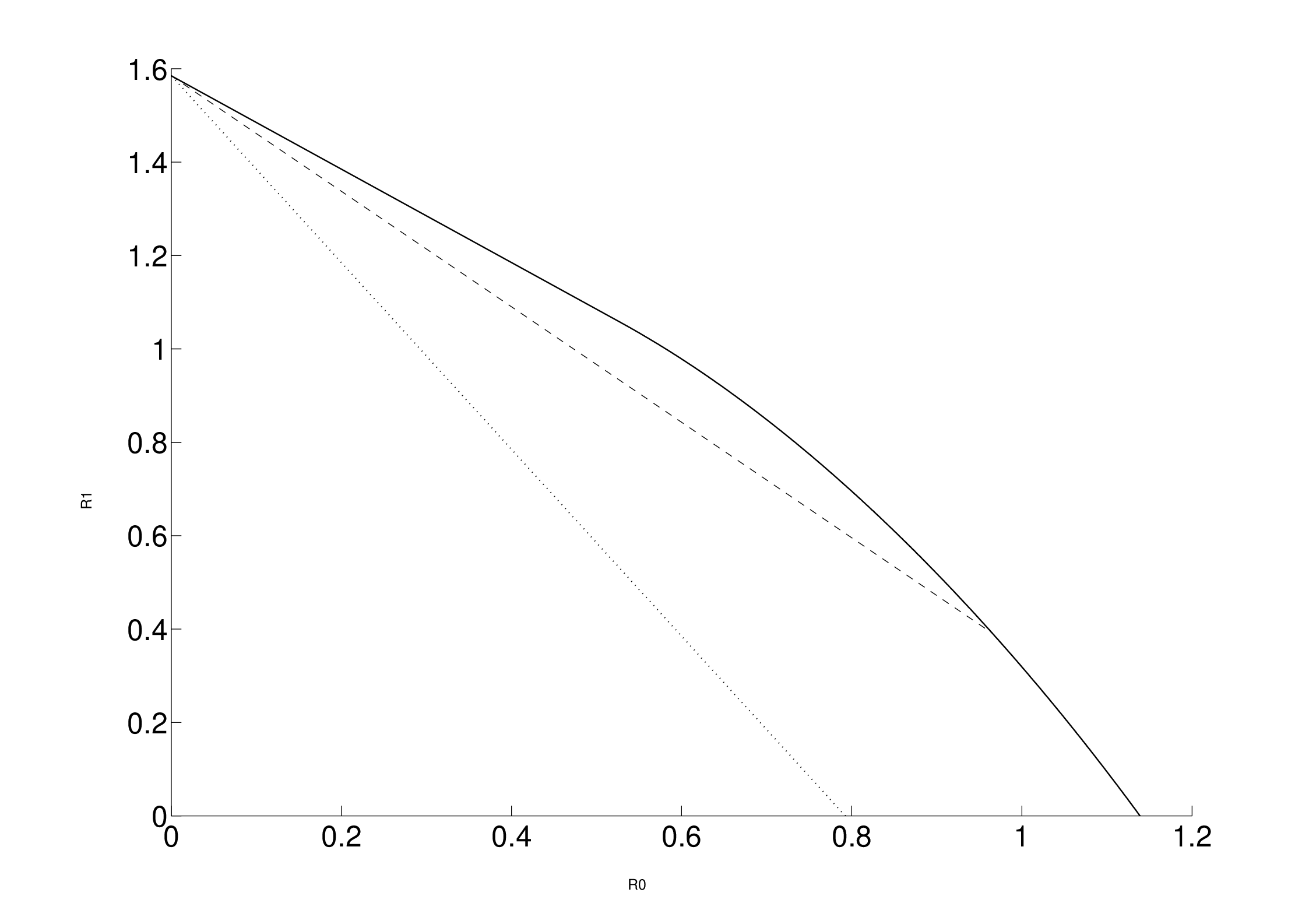, scale = 0.3}
\caption{The cut-set region is bounded by the solid curve, $\mathcal{R}$ is bounded by the dashed line and the the solid curve for $R_0 \geq 0.9654$~b/u. The time-sharing region is bounded by the dotted line.}
\label{Fig:Rate_region_two_sources_binary}
\end{figure*}

\section{Extension to other Networks}
Relay cascades are fundamental building blocks in communication networks. The results derived in the previous sections may be instrumental in order to determine the capacity of half-duplex constrained networks with more elaborate topologies. 
\subsection{Wireless Trees}
Consider, for instance, the tree structured network depicted in Fig.~\ref{Fig:Wireless_Tree}. The root (node~$1$) wants to multicast information to all leaves (nodes $2$ to $8$) via four half-duplex constrained relays. We assume noise-free bit pipes (i.e. $q=1$) and broadcast behavior at nodes with more than one outgoing arrow. The multicast capacity is limited by the capacity of the longest path in the tree which goes from node $1$ to nodes $7$ and $8$. Hence, the multicast capacity in the considered example is equal to the capacity of a cascade containing two intermediate relay nodes, which is $C_2(1)=0.7324$~b/u (see Table~\ref{Tab:Numerical_C}).
\begin{figure}[ht]
\psfrag{1}{$1$}
\psfrag{2}{$2$}
\psfrag{3}{$3$}
\psfrag{4}{$4$}
\psfrag{5}{$5$}
\psfrag{6}{$6$}
\psfrag{7}{$7$}
\psfrag{8}{$8$}
\centering
\epsfig{file=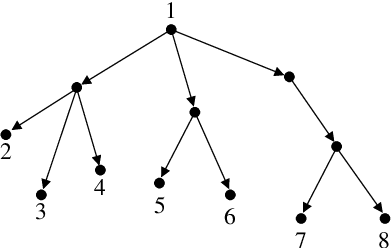, scale = 1}
\caption{A wireless binary tree. The multicast capacity is equal to $C_2(1)=0.7324$~b/u.}
\label{Fig:Wireless_Tree}
\end{figure}

\subsection{The Half-Duplex Butterfly Network}
A half-duplex butterfly network~\cite{AhlCaiLiYeu00} is shown in Fig.~\ref{Fig:Wireless_Butterfly}. Nodes~$1$ and $2$ intend to multicast information to sink nodes $4$ and $5$ via both a direct link and a half-duplex constrained relay node~$3$. Like before, broadcast transmission and bit pipes are assumed. All nodes with two incoming arrows behave according to a collision model, i.e. received information is erased if there was a transmission on both incoming links. By means of network coding (NC) with a bit-wise XOR, $\frac{2}{3}$~b/u are achievable at the sink nodes. The (well-known) strategy is (see Fig.~\ref{Fig:Wireless_Butterfly}) to send in the first time slot a binary symbol $u_1$ via broadcast~$a$ to nodes~$3$ and $4$, in the second time slot a binary symbol $u_2$ via broadcast~$b$ to nodes~$3$ and $5$ and, subsequently, in the third time slot $u_1\oplus u_2$ via broadcast~$c$ from the relay node to both sinks. However, under the usage of timing, at least $0.7729$~b/u is achievable by applying the proposed timing strategy as follows. Information originating from node~$1$ can be sent by means of timing at a rate of $C_1(1)=0.7729$~b/u concurrently on paths $1,1\times 4,4$ and $1,1\times 3,3,3\times 5,5$. Similarly, information originating from node~$2$ can be sent by means of timing at a rate of $C_1(1)=0.7729$~b/u concurrently on paths $2,2\times 3,3,3\times 4,4$ and $2,2\times 5,5$. Hence, time-sharing of both source nodes yields a multicast rate of $0.7729$~b/u. Assume for the moment that node~$1$ is sending information. Decoding at sink nodes~$4$ and $5$ is done as follows. First observe that the sequence received at sink node $4$ is a superposition of the sequence sent by source node~$1$ on the direct link  $1\times 4$ and of the relay sequence on $3\times 4$. Due to the timing-strategy source node~$1$ and the relay never transmit in the same time slot. Hence, sink node~$4$ is able to extract the information sent by source node~$0$ from the received sequence by the following protocol. In the very first block source node~$0$ forwards a message to sink node~$4$ and the relay via broadcast~$a$ while the relay is quiet. Sink node~$4$ and the relay are able to decode successfully. In the second block the relay sends the decoded message to nodes~$4$ and $5$ via broadcast~$c$ while source node~$0$ sends a new message to nodes~$3$ and $4$ via broadcast~$a$. Since sink node~$4$ knows both the strategy and, therefore, the current sequence used by the relay for encoding the source message of the previous block, it can determine the new source message by subtracting the relay sequence from the received sequence. Sink node~$5$ is also able to decode the received relay sequence by applying the rules for the proposed timing strategy. The outlined procedure is repeated in the following blocks and is used in the same way for transmitting information from source node~$2$ to nodes~$4$ and $5$.
\begin{figure}[htbp]
\psfrag{a}{\small{$a$}}
\psfrag{b}{\small{$b$}}
\psfrag{c}{\small{$c$}}
\psfrag{d}{\small{$d$}}
\psfrag{1}{$1$}
\psfrag{2}{$2$}
\psfrag{3}{$3$}
\psfrag{4}{$4$}
\psfrag{5}{$5$}
\centering
{\epsfig{file=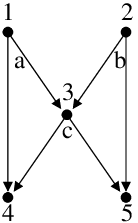, scale = 1.2}} 
\caption{The binary half-duplex butterfly network. With network coding, $\frac{2}{3}$~b/u are achievable. Timing yields $C_1(1)=0.7729$~b/u.}
\label{Fig:Wireless_Butterfly}
\end{figure}

\section{Conclusion}
The half-duplex constraint is a property common to many wireless networks. In order to overcome the half-duplex constraint, practical transmission protocols deterministically split the time of each network node into transmission and reception periods. However, this is not optimal from an information theoretic point of view, as is demonstrated by means of noise-free relay cascades of various lengths with one or multiple sources. We show that significant rate gains are possible when information is represented by an information-dependent allocation of the transmission and reception slots of the relays. Moreover, we provide a coding strategy which realizes this idea and, based on the asymptotic behavior of the strategy, we establish capacity expressions for three different scenarios. These results may be instrumental in deriving the capacity of half-duplex constrained networks with a more elaborate topology.

\appendix[]\label{Appendix}
\begin{lem}
Consider a noise-free relay cascade with a single source-destination pair (namely nodes $0$ and $m$) and $m-1$ half-duplex constrained relays where $q$ denotes the number of transmission symbols. There exists a capacity achieving input pmf $p_{X_0\dots X_{m}}$ such that $C_{m-1}(q)=H(X_{m-1})$.
\label{Lemma_App}
\end{lem}
\begin{IEEEproof}
Consider the capacity expression of Theorem~\ref{Theorem1} and assume that $H(Y_{m}|X_m)>C_{m-1}(q)$. It will be shown that $H(Y_{m}|X_m)$ can be decreased to $C_{m-1}(q)$ without forcing $H(Y_i|X_i)$, $1\leq i \leq m-1$, to decrease. The optimal input pmf given in Table~\ref{p_Xv-1_Xv} and \ref{p_X0_X1} is assumed in the following. Hence, $H(Y_m|X_m)=H(X_{m-1}|X_m)=H(X_{m-1})$ and $H(Y_i|X_i)=H(X_{i-1}|X_i)$ for all $1\leq i \leq m-1$. The assertion is clear for $m=2$ (see Fig.~\ref{fig:C_single_relay}). Let~$m>2$. Recall that $H(X_0|X_1)=p_1\log(q+1)$ and 
\begin{equation}
H(X_{i-1}|X_i)= \bar{p}_{i-1}\log q + p_iH\left(\bar{p}_{i-1}p_i^{-1}\right)
\label{cond_entr_App}
\end{equation}
where $2\leq i \leq m$ and $p_m=1$. A change of $H(X_{m-1})$ does not affect $H(X_{i-1}|X_i)$, $1\leq i \leq m-2$, since both expressions depend on different variables. Therefore, it is enough to consider $H(X_{m-2}|X_{m-1})$. The maximum of $H(X_{m-1})$ is at $p_{m-1}=1/(q+1)$. Further, $H(X_{m-1})$ is (strictly) decreasing to zero for $1/(q+1)\leq p_{m-1} \leq 1$. Let $p_{m-1} \geq 1/(q+1)$. In order to decrease $H(X_{m-1})$ to $C_{m-1}(q)$, $p_{m-1}$ has to be increased which, in turn, does not decrease $H(X_{m-2}|X_{m-1})$ since
\begin{equation}
\frac{\partial H(X_{m-2}|X_{m-1})}{\partial p_{m-1}}=\log\left( \frac{p_{m-1}}{p_{m-1}+p_{m-2}-1}\right).
\label{part_deriv_Lem}
\end{equation}
is non-negative. The assertion is proven since we do not have to consider $p_{m-1} < 1/(q+1)$. Such a choice would only decrease $H(X_{m-2}|X_{m-1})$ but would not result in larger values for $H(X_{m-1})$.
\end{IEEEproof}

\begin{IEEEproof}[Proof of Theorem~\ref{Theorem2}]
The capacity series $(C_m(q))_{m\in \mathbb{N}}$ is bounded (e.~g. by $0$ and $C_1(q)$) and monotonically decreasing (since each new relay causes an additional constraint in the corresponding convex program of section~\ref{Subsec_One_Source}). Hence, $(C_m(q))_{m\in \mathbb{N}}$ is convergent. Thus, for every $\epsilon >0$ there exists an $N \in \mathbb{N}$ such that
\begin{equation}
|C_{m-1}(q)-C_{m}(q)|< \epsilon
\label{Cauchy_condition}
\end{equation}
for all $m \geq N$. Assuming the capacity achieving input pmf, we have $C_{m-1}(q)=H(X_{m-1})$ and $C_{m}(q)=H(X_{m})$ (Lemma~\ref{Lemma_App}). Then, by (\ref{Cauchy_condition})
\begin{equation}
|H(X_{m-1})-H(X_{m})|< \epsilon
\label{Cauchy_condition_2}
\end{equation}
for all $m \geq N$. Two cases can appear in (\ref{Cauchy_condition_2}) when $\epsilon$ approaches zero: $p_{m-1}\rightarrow p^\prime$, $p_{m}\rightarrow p^{\prime\prime}$  as $m\rightarrow \infty$ with  $p^\prime\neq p^{\prime\prime}$ or $p^\prime = p^{\prime\prime}$.

Consider the first case, i.e. $p^\prime\neq p^{\prime\prime}$. By (\ref{constraint_on_pi_pi+1}), $p^\prime + p^{\prime\prime}$ has to be greater than or equal to $1$. However, $p^\prime + p^{\prime\prime}$ is always smaller than $1$ what can be seen as follows. First, note that the maximum of $H(X_k)$ is at $1/(q+1)$. Hence, without restriction we can assume that $p^\prime < 0.5$ and $p^{\prime\prime}>0.5$ (otherwise $p^\prime + p^{\prime\prime}<1$ a priori). Since the first derivative of $H(X_k)$ is point symmetric with respect to $(0.5,-\log q)$, we have $0.5-p^{\prime}>p^{\prime \prime}-0.5$ what yields $p^\prime + p^{\prime\prime}<1$. 

Hence, only the second case is valid, i.e. $p_{m-1}, p_{m}\rightarrow p$ as $m\rightarrow \infty$. But this implies, using Table~\ref{p_Xv-1_Xv} and replacing $p_{m-1}$ and $p_m$ by $p$, that $C_\infty(q)$ is smaller than or equal to the maximum of
\begin{equation}
H(X_{m-1}|X_m)= \bar{p}\log q + pH\left(\bar{p}p^{-1}\right).
\label{entropy_to_max_App}
\end{equation}
Since the Table~\ref{p_Xv-1_Xv} with $p_i-1$ and $p_i$ can be assumed for all $p_{X_{i-1}X_i}$, $i \geq 1$, it follows that $C_\infty(q)$ is equal to the maximum of (\ref{entropy_to_max_App}) which is 
\begin{equation}
\max_p H(X_{m-1}|X_m)=\log\left( \frac{1+\sqrt{4q+1}}{2} \right).
\label{App:C_infinity}
\end{equation}
\end{IEEEproof}
\begin{IEEEproof}[Proof of Theorem~\ref{Theorem2} - Remark i)]
The maximum of (\ref{App:C_infinity}) is achieved at 
\begin{equation}
p=\frac{1}{2}\left ( 1+\frac{1}{\sqrt{4q+1}}\right).
\label{Proof_Th2:listen_fraction}
\end{equation}
and an optimal input pmf is given by Table~\ref{p_Xv-1_Xv} when $p_{i}$ is replaced by (\ref{Proof_Th2:listen_fraction}) for all $i\geq 1$. Note that $p_{X_0X_1}$ is also characterized by Table~\ref{p_Xv-1_Xv} . Another optimal input pmf is given by Table~\ref{p_Xv-1_Xv} and Table~\ref{p_X0_X1} when $p_i$ is replaced by (\ref{Proof_Th2:listen_fraction}) for all $i\geq 1$. Since $p_{X_0X_1}$ is the only part which differs from the pmf considered before, it suffices to show that the value of $H(X_0|X_1)$ under the claimed pmf is always greater or equal to (\ref{App:C_infinity}), i.e. 
\begin{equation}
\frac{1}{2}\left ( 1+\frac{1}{\sqrt{4q+1}}\right)\log(q+1)\geq\log\left( \frac{1+\sqrt{4q+1}}{2} \right)
\label{App_first_step}
\end{equation}
or, equivalently,
\begin{equation}
(q+1)^{\frac{1}{2}\left ( 1+\frac{1}{\sqrt{4q+1}}\right)}\geq\frac{1+\sqrt{4q+1}}{2}.
\label{App_last_step_start}
\end{equation}
Lowering the left hand side while increasing the right hand side gives
\begin{equation}
(q+1)^{\frac{1}{2}\left ( 1+\frac{1}{2\sqrt{q+1}}\right)}\geq\frac{1+2\sqrt{q+1}}{2}.
\label{App_last_step}
\end{equation}
Using the substitution
\begin{equation}
\tilde{q}=\frac{1}{2\sqrt{q+1}}
\end{equation}
in (\ref{App_last_step}), we obtain
\begin{equation}
(2\tilde{q})^{-\tilde{q}}\geq \tilde{q}+1.
\label{App_simplified_equation}
\end{equation}
(\ref{App_simplified_equation}) is satisfied for all $\tilde{q}\in [0,0.2]$ what can be seen as follows. First note that (\ref{App_simplified_equation}) is satisfied for $\tilde{q}=0$ and $\tilde{q}=0.2$. Since $(2\tilde{q})^{-\tilde{q}}$ is concave due to a non-positive second derivative in the considered domain, (\ref{App_simplified_equation}) is valid for all $\tilde{q}\in [0,0.2]$. Thus, (\ref{App_first_step}) is true for all $q > 5$. The validity of (\ref{App_first_step}) for the remaining $q\in \{1,\dots,5\}$ is easily checked by direct computation.
\end{IEEEproof}

\section*{Acknowledgment}
We would like to thank Prof. Michelle Effros and Prof. Gerhard Kramer for carefully reading the manuscript and for providing many useful suggestions. Further, we would like to thank the anonymous reviewers and the associate editor for helpful comments.

\bibliographystyle{unsrt}

\enlargethispage{-2.7cm}
\begin{IEEEbiographynophoto}{Tobias Lutz}
was born in Krumbach, Germany, on May 02, 1980. He received the B.Sc. and Dipl.-Ing. degree in Electrical Engineering and Information Technology from the Technische Universit\"{a}t M\"{u}nchen, Germany in 2007 and 2008, respectively. Funded by a grant from the American European Engineering Exchange program he studied Electrical and Computer Engineering at the Rensselaer Polytechnic Institute, Troy, NY, from 2004 to 2005. Since 2008, he has been with the Institute for Communications Engineering at the Technische Universit\"{a}t M\"{u}nchen where he is currently working towards the Dr.-Ing. degree. Since 2009, he has been studying financial mathematics at the Ludwig Maximilian Universit\"{a}t M\"{u}nchen. His current research interests include information and coding theory and their application to wireless relay networks.
\end{IEEEbiographynophoto}
\begin{IEEEbiographynophoto}{Christoph Hausl}
(S'05) received the Dipl.-Ing. and Dr.-Ing. degree in Electrical Engineering and Information Technology from the Technische Universit\"{a}t M\"{u}nchen, Germany in 2004 and 2008, respectively. Since 2004, he has been with the Institute for Communications Engineering at the Technische Universit\"{a}t M\"{u}nchen as a research and teaching assistant. He is co-recipient of best paper awards at the International Conference on Communications (ICC) 2006 and at the International Workshop on Wireless Ad-hoc and Sensor Networks (IWWAN) 2006. He served as guest editor for the special issue on Physical Layer Network Coding for Wireless Cooperative Networks of the EURASIP Journal on Wireless Communications and Networking in 2010. His current research interest include channel coding and network coding and their application to wireless relay networks and mobile communications.
\end{IEEEbiographynophoto}
\begin{IEEEbiographynophoto}{Ralf K\"{o}tter}
(S'91--M'96--SM'06--F'09) was born in K\"{o}nigstein im Taunus, Germany, on October 10, 1963. He received a Diploma in Electrical Engineering from the Technische Universit\"{a}t Darmstadt, Germany, in 1990 and the Ph.D. degree from the Department of Electrical Engineering, Link\"{o}ping University, Sweden, in 1996. From 1996 to 1997, he was a Visiting Scientist at the IBM Almaden Research Center, San Jose, CA. He was a Visiting Assistant Professor at the University of Illinois, Urbana-Champaign, and a Visiting Scientist at CNRS in Sophia-Antipolis, France, from 1997 to 1998. During 1999--2006, he was member of the faculty at the University of Illinois, Urbana-Champaign. In 2006, he joined the faculty of the Technische Universit\"{a}t M\"{u}nchen, Germany, as the Head of the Institute for Communications Engineering (Lehrstuhl f\"{u}r Nachrichtentechnik). He passed away on February 2, 2009.

His research interests were in coding theory and information theory, and in their applications to communication systems.

During 1999--2001, Prof. K\"{o}tter was an Associate Editor for Coding Theory and Techniques for the IEEE TRANSACTIONS ON COMMUNICATIONS, and during 2000--2003, he served as an Associate Editor for Coding Theory for the IEEE TRANSACTIONS ON INFORMATION THEORY. He was Technical Program Co-Chair for the 2008 International Symposium on Information Theory, and twice Co-Editor-in-Chief for special issues of the IEEE TRANSACTIONS ON INFORMATION THEORY. During 2003--2008, he was a member of the Board of Governors of the IEEE Information Theory Society. He received an IBM Invention Achievement Award in 1997, an NSF CAREER Award in 2000, an IBM Partnership Award in 2001, and a Xerox Award for faculty research in 2006. He also received the IEEE Information Theory Society Paper Award in 2004, the Vodafone Innovationspreis in 2008, the Best Paper Award from the IEEE Signal Processing Society in 2008, and the IEEE Communications Society \& Information Theory Society Joint Paper Award twice, in 2009 and in 2010.
\end{IEEEbiographynophoto}
\end{document}